\documentclass[12pt]{article}
\usepackage[margin=1in]{geometry} 
\usepackage{amsmath}
\usepackage{tcolorbox}
\usepackage{amssymb}
\usepackage{amsthm}
\usepackage{mathtools}
\usepackage{lastpage}
\usepackage{accents}
\usepackage{systeme}
\usepackage{tikz}
\usepackage{enumerate}
\usepackage{tikz-cd}
\usepackage{enumitem}
\usepackage{bbm}
\usepackage{hyperref}
\usepackage[capitalise]{cleveref}

\newtheorem{theorem}{Theorem}[section]
\newtheorem{proposition}{Proposition}[section]
\newtheorem{lemma}{Lemma}[section]

\newtheorem{claim}{Claim}[section]

\newtheorem{definition}{Definition}[section]

\theoremstyle{definition}

\newtheorem*{case*}{Case}

\newcommand{\R}{\mathbb{R}}
\newcommand{\Rg}{\mathbb{R}_{\ge 0}}

\newcommand{\bra}[1]{\left[#1\right]}

\newcommand{\set}[1]{\left\{ #1 \right\}}
\newcommand{\abs}[1]{\left|#1\right|}

\newcommand{\B}[1]{\boldsymbol{#1}}

\newcommand{\mc}{\mathcal}

\makeatletter
\let\@fnsymbol\@arabic
\makeatother

\title{Resonance: Transaction Fees for Heterogeneous Computation}

\author{Maryam Bahrani$^*$\thanks{Ritual. Email: \texttt{maryam@ritual.net}.} \and Naveen Durvasula$^*$\thanks{Ritual \& Columbia University. Email: \texttt{naveen@ritual.net}.}}
\date{}
\allowdisplaybreaks

\begin{document}

\maketitle

\def\thefootnote{*}\footnotetext{The authors contributed equally to this work.}
\def\thefootnote{\arabic{footnote}}

\begin{abstract}
Blockchain networks are facing increasingly heterogeneous computational demands, and in response, protocol designers have started building specialized infrastructure to supply that demand.
This paper introduces \textit{Resonance}: a new kind of transaction fee mechanism for the general two-sided market setting (with \emph{users} on one side and \emph{nodes} on the other), where both sides of the market exhibit a high degree of heterogeneity.
We allow users submitting transactions to have arbitrary valuations for inclusion, nodes responsible for executing transactions to incur arbitrary costs for running any bundle of transactions, and further allow for arbitrary additional constraints on what allocations are valid. 
These constraints can, for example, be used to prevent state conflicts by requiring transactions that utilize the same part of the network's state to not be executed in parallel. They also enable support for new transaction types, such as transactions that require multiple nodes for execution (e.g. to run multi-party computation for better transaction privacy).
    
Resonance's design utilizes competition among sophisticated \emph{brokers} to find individualized prices for each transaction and node. We show that at pure Nash equilibria, Resonance finds an efficient outcome and minimizes the need for strategization by users and nodes. It is also budget-balanced, individually rational for all parties, and computationally tractable. 
\end{abstract}

\section{Introduction}
Public blockchain networks utilize decentralized consensus protocols to provide a ``trusted'' computing environment; ideally, these protocols provide users practical assurances that transactions they submit to the network will be faithfully executed to modify a consistent and publicly-accessible network state. As more applications for this type of computing are discovered, users' computational demands on blockchain networks have become increasingly \textit{heterogeneous}. Initially, with Bitcoin, computational demands were restricted to simple arithmetic operations as users of the network were primarily interested in making financial transactions. Today, users make increasingly sophisticated computational demands in the form of arbitrary smart contracts. The magnitude and scope of demand for computation continues to expand as more specialized on-chain services are deployed and get traction. \cite{polkadot, cosmos, raas}. 

In response to this increasing heterogeneity in use cases, protocol designers have sought to develop specialized computational infrastructure to supply this demand. A concrete and recent example of such specialization on the Ethereum blockchain is EIP-4844~\cite{4844}. This proposed change to Ethereum enables a specialized form of short-term storage called ``blobs''. These improvements were designed in response to the computational demands that rollup networks and other services now have (\cite{scroll-costs,zksync,matic}). Numerous sovereign blockchains and rollups themselves also specialize in executing certain types of computational demand (e.g. proving services such as Lagrange\footnote{\url{https://docs.lagrange.dev/zk-coprocessor/overview}}, Gevulot\footnote{\url{https://gevulot.com/}}, and SP1\footnote{\url{https://blog.succinct.xyz/introducing-sp1/}}, ML and AI training services \cite{bittensor,gensyn-litepaper,modulus}, FHE \cite{fhenix-whitepaper, zama-whitepaper}, oracle networks \cite{chainlink,gelato}). As a result, there is increasing heterogeneity not only in demand but also supply of trusted computation.

In order to efficiently match computational supply with demand, blockchain networks use market mechanisms that set fees and rewards for users and computational suppliers respectively. Major blockchains including Ethereum measure the resource usage of a given transaction using a single numeraire, and set prices for transaction inclusion by setting a unit price for the numeraire. These fee mechanisms have been studied extensively in recent work, and their implementations (e.g. EIP-1559) satisfy desirable properties.\cite{tfm-1559,tfm-foundations,tfm-collusion} In order to efficiently match demand and supply in the presence of heterogeneity, however, these market mechanisms must be adapted. The Ethereum blockchain, for example, has recently added a second resource numeraire to price the usage of blobs as part of EIP-4844 \cite{4844}. This general idea of having multiple resource numeraires, or ``dimensions,'' across which the heterogeneity in computational demand is expressed, and subsequently setting a unit price for usage across each numeraire to clear demand is known as a \textit{multi-dimensional fee market}. Recent work has studied their properties \cite{multidim1,multidim2,chaos,multidim3,multidim4}. Services beyond Ethereum and other layer one blockchains face even more complexity due to greater levels of heterogeneity in demand and supply. Prover networks \cite{MD-for-prover-markets,lagrange-paper} are a notable example, where users have heterogeneous values for proof capabilities of different suppliers, and suppliers have heterogeneous costs and constraints for each transaction due to variability in hardware quality and specialization as well as the ability to parallelize. Another example is Ritual\footnote{\url{https://ritual.net/}}, which seeks to enable users demanding highly heterogeneous and expensive AI and ML compute to execute transactions over a highly heterogeneous collection of compute nodes with different resources.
This paper describes and analyzes a market design problem that generalizes all of these instances, with full heterogeneity on both sides of the market.

\subsection{Summary of Results}

Our principal contribution in this paper is the description and analysis of the Resonance mechanism, which succeeds in an extremely general setting, with fully heterogeneous supply and demand and arbitrary additional constraints on the allocation.

To describe our model at a high level, we consider a two-sided marketplace, with users demanding computation on one side and nodes supplying computational resources on the other. Each user submits a transaction, and has an arbitrary (private) value for that transaction getting executed. Each node has an arbitrary (private) cost function for executing any bundle of transactions. We define an \emph{allocation} as an assignment of each transaction to a set of nodes, and allow for an arbitrary validity constraint on the set of ``allowable'' allocations. These constraints can for example ensure that transactions that touch the same part of the network state are run by the same node, or state that nodes are not assigned transactions that exceed their capacity.

This setting is very general, and in particular captures previous work on fee mechanisms. For example, if there is only one node, which has zero cost for running transactions, and the validity constraint can be expressed as a capacity constraint on a single numeraire (i.e. an allocation is valid if the total gas consumption of allocated transactions does not exceed the maximum block size), we recover the setting of Ethereum fee markets with a single dimension. As another example, if again there is a single numeraire (denoting CPU cycles), but more than one node, and the cost function of each node $n$ is a node-specific constant $x_n$ times the total units of the numeraire assigned to that node, then we recover the setting of prover markets as studied in \cite{MD-for-prover-markets,lagrange-paper}.

A fee mechanism in this setting must determine an allocation of transactions to nodes, and prices paid by each transactions as well as paid to each node. Users and nodes have quasilinear utilities, meaning users maximize their value minus payment, and nodes maximize their reward minus cost. The goal is to design a mechanism that satisfies several desirable properties (stated informally):
\begin{enumerate}
    \item \emph{Budget-Balance}: The mechanism should not require subsidization to run.
    \item \emph{Individual Rationality:} No one is worse off by participating in the mechanism than abstaining.
    \item \emph{Efficiency:} The mechanism must maximize \emph{surplus}, defined as the sum of utilities of users and nodes.
    \item \emph{Incentive-Compatibility:} The mechanism does not require complex strategization by users and nodes.
    \item \emph{Tractability}: The mechanism is computationally lightweight and feasible to implement on-chain.
\end{enumerate}

We motivate the design of the Resonance Mechanism by first examining whether existing solutions can achieve these properties. In \cref{sec:mdfm} we perform a formal analysis of multi-dimensional fee markets (such as multi-dimensional EIP-1559) with an emphasis on efficiency in particular. Our results imply that in a heterogeneous setting, selecting allocations and setting prices are both complicated in their own right, \emph{even if one has access to an optimal oracle for the other}.

 Furthermore, pricing is challenging because higher heterogeneity necessitates a higher number of prices. \cref{thm:wo} quantifies this intuition, showing that even with an optimal allocation oracle, surplus maximization (i.e. efficiency) requires a separate price for each transaction. This limits the applicability of the currently popular approach of using an on-chain dynamic price oracle that updates according to historical signals about supply and demand. Implementing a pricing oracle at that level of granularity on-chain is infeasible both computationally and information-theoretically.

 Our high-level solution to these challenges is to off-load both of these complex tasks to a third party of agents that we call \textit{brokers}. These sophisticated agents are responsible for finding allocations and prices, and in exchange can extract part of the generated surplus from the allocation they find. We formally define the five properties for a setting with brokers as well as users and nodes, and show that the Resonance Mechanism satisfies all properties: It is (a) budget-balanced, (b) individually rational for users, nodes, and brokers, (c) surplus-maximizing in equilibrium, (d) incentive-compatible for users and nodes (conditional on brokers playing certain natural equilibrium strategies), and (e) computationally tractable.


The mechanism itself is quite simple. Each broker submits an allocation and individualized payments for all users and nodes. The auctioneer collects the broker submissions, as well as valuations and costs from users and nodes, and finds the surplus maximizing broker submission (as computed given the elicited valuations and costs). If that broker submission satisfies individual rationality for all users and nodes (\textit{i.e.}, results in non-negative utility for them given their elicited valuations/costs), the mechanism outputs the broker's submission, and pays out any leftover cash after settling payments to the chosen broker.

The analysis of the mechanism's incentive properties in particular is subtle. Since strong notions such as dominant-strategy incentive-compatibility (DSIC) are impossible to achieve in this setting \cite{myerson1981optimal}, we introduce a new version of incentive-compatibility that our mechanism satisfies, which might be of independent interest. This definition (\cref{def:dsic-barring-s}) is appropriate for a setting where there are two types of agents, ordinary and sophisticated, and the primary objective is to minimize the need for strategization by ordinary agents. The best-response conditions of the new definition accordingly only checks the best-response conditions of each ordinary agent with respect to alternative actions of \emph{other ordinary agents}, while keeping keeping the strategies of sophisticated agents fixed. This relaxation of incentive-compatibility can provide an assurance to an ordinary agent that might be considering reasoning about strategies of other ordinary agents that such strategization will not be profitable. However, it provides no guarantees with respect to alternative strategies of sophisticated agents. Assuming ordinary agents have similar levels of rationality and resources, it is plausible that they would entertain the former category of strategization but not the latter.

We conclude the paper by listing some practical considerations when deploying Resonance. We give a high level overview of potential extensions to the mechanism that could address these considerations, and defer a more rigorous treatment to future work.

\section{The Heterogeneous Setting}\label{sec:model}

\paragraph{Agents and Their Types.}
In the heterogeneous setting, there are two types of agents, users who demand computation, and nodes who supply it. Each user submits a transaction $t$, which we will use to refer to both the transaction and the user who submitted it.

Let $T$ be the set of transactions (and users), $N$ the set of nodes, and $A:=T\cup N$ the set of all agents. For each transaction $t \in T$, there is a valuation $v_t\in\mathbb{R}_{\ge0}$ that denotes the maximum amount of money that the user $t$ is willing to pay to have $t$ be executed. For every node $n \in N$, there is a cost function $c_n(X):2^T\rightarrow\mathbb{R}_{\ge 0}$ that denotes the amount of money node $n$ must be paid in order to execute a batch of transactions $X \subseteq T$. We require that $c_n(\emptyset)=0$, but make no other assumptions about the cost functions. This degree of generality allows nodes to have custom costs depending on their hardware, what they can run in parallel, and how they choose to schedule and run the batch of jobs that are assigned to them. Initially, $v_t$ and $c_n$ are only privately known by the respective users and nodes. We refer to these quantities as the agents' \emph{types}, with $\theta_a$ denoting the type of agent $a\in A$ defined as $v_a$ if $a\in T$ and $c_a$ if $a\in N$. Let $\Theta_a$ be the set of all possible types for agent $a$, and $\Theta=\bigtimes_{a\in A}\Theta_a$ the space of type profiles.

\paragraph{Allocations.}
An \emph{allocation} $\alpha:T\rightarrow 2^N$ assigns each transaction to a (possibly empty) set of nodes. The \emph{inverse} of an allocation $\alpha$ is defined as $\alpha^{-1}(n) := \{t \in T \mid n \in \alpha(t)\}$, which specifies the transactions that a node $n$ executes under the allocation $\alpha$.

Given an allocation $\alpha$, the \emph{transactions of $\alpha$}, denoted by $T(\alpha)$, are the set of transactions assigned by $\alpha$ to a (non-empty) set of nodes, that is $T(\alpha) := \bigcup_{n\in N}\alpha^{-1}(n)$. Similarly, the \emph{nodes of $\alpha$}, denoted by $N(\alpha)$, are the set of nodes that $\alpha$ assigns at least one transaction to, that is $N(\alpha) := \bigcup_{t\in T}\alpha(t)$.


\paragraph{Validity and Conflicts.}

Only some allocations may be \emph{valid}. In the context of Ritual Chain, for example, two transactions that touch the same state should be run by the same node to avoid causing state conflicts. Thus, we would call any allocation that maps such transactions to different nodes invalid. Alternatively, we might have protocol-wide upper bounds on the consumption of each resource per batch, which we can enforce by labelling allocations that exceed these bounds as invalid.

We denote by $V$ the set of all valid allocations. We make no assumptions about $V$ other than the fact that it includes the empty allocation (which assigns all transactions to the empty set). This degree of generality means our mechanism can capture arbitrarily complex constraints, including state conflicts and congestion bounds described in the previous paragraph.


\paragraph{Routings.}

A \emph{transaction payment rule} $\pi: T \to \mathbb R_{\ge 0}$ determines how much each transaction should pay for execution. A \emph{node payment rule} $\phi: N \to \mathbb R_{\ge 0}$ determines how much each node should be paid for the execution of transactions assigned to it.

A \emph{routing} $(\alpha, \pi, \phi)$ specifies a valid allocation $\alpha \in V$, and the user and node payment rules. Note that that the definition of routings requires that the allocation is valid. 
We denote by $\mc{R}$ the set of all routings.

The \emph{margin} of a routing $(\alpha, \pi, \phi)$ measures the net cash flow resulting from it, and is given by $\sum_{t\in T}\pi(t)-\sum_{n\in N}\phi(n).$ We denote by $\gamma$ the function that takes as input a routing and outputs its margin.
A routing is called \emph{budget-balanced} if its margin is non-negative.

We denote by $R_{\emptyset}$ the \emph{empty routing}, which has an allocation rule mapping all transactions to the empty set, and payment rules that are equal to zero for all transactions and nodes.

\paragraph{Utilities.}
Under a routing $R=(\alpha, \pi, \phi)$, a user who submits a transaction $t \in T$ and has valuation $v_t$ obtains utility
\[
u(t,R\mid v_t) := \underbrace{\mathbbm{1}[\alpha(t) \ne \emptyset] \cdot v_t}_{\text{Valuation for inclusion in the allocation}} - \underbrace{\pi(t)}_{\text{Amount paid for execution}}
\]
and a node $n\in N$ with cost function $c_n$ obtains utility
\[
u(n, R \mid c_n) := \underbrace{\phi( n)}_{\text{Amount paid to node}} - \underbrace{c_n(\alpha^{-1}(n))}_{\text{Cost incurred by node to execute allocated transactions}}
\]


\paragraph{Efficient Routings.} 
The \emph{surplus} of a routing $R$ is the sum of utilities of transactions and nodes under that routing,
\[
\mathcal{S}(R \mid \B\theta) := \sum_{a \in A} u(a,R\mid \theta_a).
\]
A routing is called \emph{efficient} if it maximizes surplus.
Sometimes we drop the $\B\theta$ when invoking $\mathcal S$ when its values are clear from context.





\section{Desired Properties in a Mechanism}
\label{sec:broker}




It turns out in order to obtain good welfare guarantees while maintaining a simple user experience in a heterogeneous setting, transactions must be priced individually (we refer the reader to \cref{sec:mdfm} for a formal argument justifying this claim).
The natural question is how to discover prices at such a high level of granularity. In particular, existing solutions that implement an on-chain price oracle based on historical demand run into information theoretic and computational issues if extended to provide individualized pricing. This is because the amount of data necessary to learn a high-dimensional price vector is very large, and the aggregation mechanism necessary for an accurate estimator needs to be a sophisticated algorithm that is costly to run on-chain. Can we design a tractable mechanism that works in the fully heterogeneous setting?

The following are properties we desire in a routing mechanism (stated informally here and formalized later).
\begin{enumerate}
    \item \emph{Budget-Balance}: The mechanism should not require subsidization to run.
    \item \emph{Individual Rationality:} No one is worse off by participating in the mechanism than abstaining.
    \item \emph{Efficiency:} The mechanism finds a surplus-maximizing routing. 
    \item \emph{Incentive-Compatibility:} The mechanism is ex-post incentive compatible, meaning truth-telling by users and nodes is a Nash equilibrium.
    \item \emph{Tractability}: The mechanism is computationally lightweight and feasible to implement on-chain.
\end{enumerate}


A simple special case of our setting is the problem of bilateral trade, where there is one transaction and one node, and the transaction has value $v$ for being assigned while the node has cost $c$ for executing the transaction. When these values are private, it is well-known that the first four properties cannot be simultaneously satisfied. \cite{green-laffont-1979,MS} On the other hand, if the mechanism has access to these values, the following simple design satisfies all four properties: Compute an efficient routing, and present it as a take-it-or-leave-it offer to every transaction and node, with trade happening only if all parties take the offer. That is, without private information, surplus-maximization is simply an optimization problem. However, even with a perfect price oracle, the computational task of finding an efficient \emph{allocation} with arbitrary feasibility constraints can be hard to even approximate, failing the \emph{tractability} requirement. This is in addition to the tractability issues of implementing the price oracle itself.

Motivated by these challenges, we propose a mechanism that delegates the complicated tasks of estimating prices and optimal allocations to a third set of agents, called \emph{brokers}. Brokers specialize in finding efficient routings, and are compensated for their efforts by earning some of the generated surplus, called \emph{margin}. As a result, the incentives of brokers are aligned with finding the surplus-maximizing allocation, and competition between brokers results in low extracted margins. This effectively implements an allocation and price oracle, which the mechanism uses to achieve efficiency as well as incentive-compatibility for transactions and nodes. While mechanism designers typically strive to minimize the need for participant sophistication, here we only aim to minimize that need for a subset of participants (users and nodes), while relying on sophistication of other participants (brokers) to ease the computational cost of running the mechanism itself.
We will next describe the Resonance mechanism, and show that it satisfies all five properties. 




\section{The Resonance Mechanism: Description} \label{sec:broker-spec}


Resonance is formalized as the following game-form between transactions $T$, nodes $N$, and a set of brokers $B$. We will denote by $A$ the set $A$ of all transactions and nodes.



\paragraph{Information Structure.}
As before, users and nodes have private \emph{types} representing their payoff functions, with $\theta_a\in\Theta_a$ denoting the type of agent $a\in A$ defined as $v_a$ if $a\in T$ and $c_a$ if $a\in N$. In this paper, we consider a setting where every broker knows the exact types of every transaction and node (and transactions and nodes are aware of this fact).\footnote{See \emph{Broker Specialization} in \cref{sec:conclusion} for a discussion of when brokers have incomplete information.} Brokers do not have types.

\paragraph{Action Spaces.}
The mechanism is direct for transactions and nodes, meaning their action spaces are the same as their type spaces. Each agent $a\in A$ submits a type $\theta'_a\in \Theta_a$ to the mechanism (which may not be the same as $a$'s true type $\theta_a$).

Each broker $b$ submits a \emph{proposal} to the mechanism, which consists of a routing $R_b$. We will denote by $\Sigma_b$ the action space of broker $b$, and by $\Sigma=\bigtimes_{b\in B}\Sigma_b$ the action space of all brokers.

\paragraph{The Mechanism.}

Resonance is a deterministic map that takes as input an an action profile $(\B\theta',\B\sigma) \in \Theta\times\Sigma$ and outputs a routing $R$ as well as a winning broker $b$.

Let $b^*$ be the the broker who submitted the highest-surplus budget-balanced proposal given the submissions of transactions and nodes, that is
\[ b^* := \arg\max_{b\in B}\set{\mc S(R_b\mid \B \theta') \mid \text{ $R_b$ budget-balanced}},\]
breaking ties according to an arbitrary fixed ordering over the brokers (that does not depend on the actions of any participants). Let $R^*$ be the routing submitted by $b^*$.

If $R^*$ results in negative utility for any transaction or node assuming true types $\B\theta'$, (\textit{i.e.} if there exists some  some $a\in A$ such that $u(a,R^*\mid \theta_a)< 0$), then the mechanism outputs the empty routing. Otherwise, the mechanism outputs $(R^*,b^*)$.

\paragraph{Payments and Utilities.}
Payments to users and nodes are settled according to $R^*$. Payments to brokers are set to zero if the mechanism outputs the empty routing. Otherwise, payments are set to zero for all $b\not= b^*$, and to the margin of $R^*$ for $b^*$.

Users and nodes are quasilinear, and brokers are revenue-maximizing. The utility of a user or node $a$ with type $\theta_a$ under action profile $(\B\theta',\B\sigma) \in \Theta\times\Sigma$ is denoted by $u(a,\B\theta',\B\sigma\mid \theta_a)$ and is defined as their utility under the output routing when everyone plays $(\B\theta',\B\sigma)$. The utility of a broker $b$ under $(\B\theta',\B\sigma)$ is denoted by $u(b,\B\theta',\B\sigma)$ and is equal to the payment it receive from the mechanism when everyone plays $(\B\theta',\B\sigma)$.
\section{The Resonance Mechanism: Analysis}


\subsection{Budget-Balance}
We say a mechanism is \emph{weakly budget-balanced} if its net inflow of cash is weakly positive, with equality in the case of \emph{strong budget-balance}. The following shows that Resonance is strongly budget-balanced.
\begin{lemma}\label{lem:BB}
    Fix an action profile $(\B\theta',\B\sigma)$, and let $R^*=(\alpha,\pi,\phi)$ be the routing output by the mechanism and $b^*$ the winning broker. We have
\[
    \underbrace{\sum_{t \in T}\pi(t)}_{\text{Total inflow of cash }} =
    \underbrace{\sum_{n \in N}\phi(n) + \gamma(R^*)}_{\text{Total outflow of cash}}.
\]
\end{lemma}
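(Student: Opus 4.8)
The plan is to unwind the relevant definitions and split on whether the mechanism outputs the empty routing. At heart the statement is just a cash-flow accounting identity: every unit of money collected from transactions is either paid out to a node or paid out to the winning broker, and by construction the broker's payment equals the margin $\gamma(R^*)$.

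First I would dispatch the degenerate case where the output routing is $R_{\emptyset}$ — this occurs when the winning proposal $R^*$ would leave some $a\in A$ with negative utility under $\B\theta'$ (and, if one reads the $\arg\max$ as undefined when no broker submits a budget-balanced proposal, in that situation as well). Here $\pi\equiv 0$ and $\phi\equiv 0$ by the definition of the empty routing, all broker payments are set to zero by the payment rule, and $\gamma(R_{\emptyset})=0$; hence both sides of the asserted equality are $0$ and it holds trivially.

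In the main case the mechanism outputs $(R^*,b^*)$ with $R^*=(\alpha,\pi,\phi)$. By the definition of the margin function, $\gamma(R^*)=\sum_{t\in T}\pi(t)-\sum_{n\in N}\phi(n)$, and rearranging gives exactly $\sum_{t\in T}\pi(t)=\sum_{n\in N}\phi(n)+\gamma(R^*)$. To interpret this as the claimed inflow/outflow balance I would note that payments to users and nodes are settled according to $R^*$ (so the true outflow to nodes is $\sum_{n\in N}\phi(n)$), and that $b^*$ is the only broker who receives a nonzero payment, equal precisely to $\gamma(R^*)$ — so $\gamma(R^*)$ is genuinely an outflow term, not retained cash. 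I would also remark in passing that since $b^*$ ranges only over brokers whose proposals are budget-balanced, $\gamma(R^*)\ge 0$, which already yields weak budget-balance; the displayed identity then strengthens this to strong budget-balance, as the entire surplus cash leaves the mechanism.

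The argument is purely definitional, so I do not anticipate a substantive obstacle; the only thing needing care is the bookkeeping — ensuring the case split is exhaustive and, in particular, not dropping the broker's payment (which would spuriously make the mechanism look like it retains $\gamma(R^*)$). I would also cross-check the ``no budget-balanced proposal'' edge case against the mechanism's stated tie-breaking and fallback behavior to confirm the empty-routing branch covers it.
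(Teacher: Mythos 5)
Your proof is correct and takes the same route as the paper, which simply observes that the identity is a rearrangement of the definition of margin. Your additional bookkeeping (the empty-routing case and the observation that the broker's payment accounts for the $\gamma(R^*)$ term) is a fuller version of the same purely definitional argument.
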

\begin{proof}
    This follows immediately from the definition of margin.
\end{proof}

\subsection{Individual Rationality}
Individual rationality is a participation constraint guaranteeing that agents are no worse off for electing to participate in the mechanism. A mechanism is \emph{ex post individually rational} for a player if, knowing its own type and the types of everyone else, the player always chooses to participate in the mechanism. The following lemma shows that Resonance is individually rational for all participants.

\begin{lemma}[Individual rationality for transactions and nodes]\label{lem:IR-agent}
For all transactions and nodes $a$, all types $\theta_a\in\Theta_a$, and all action profiles $(\B{\theta}'_{-a},\B\sigma)$, truth telling results in non-negative utility for $a$:
\[
u(a,\theta_a,\B{\theta}'_{-a},\B\sigma\mid \theta_a) \geq 0.
\]
\end{lemma}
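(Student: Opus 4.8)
The plan is a direct case analysis on the routing that Resonance outputs under the action profile $(\theta_a,\B\theta'_{-a},\B\sigma)$. Write $R$ for that output routing. Since agent $a$ reports truthfully, its reported type coincides with its true type, so the reported type profile is $\B\theta'=(\theta_a,\B\theta'_{-a})$ with $\theta'_a=\theta_a$; no assumption is placed on the other reports $\B\theta'_{-a}$. By construction the mechanism either outputs the empty routing $R_{\emptyset}$ or outputs the winning broker's proposal $R^*$, and $a$'s utility under the action profile is by definition $u(a,R\mid\theta_a)$.

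In the first case, $R=R_{\emptyset}$, and I compute $a$'s utility directly from the definitions. If $a=t$ is a transaction, then $u(t,R_{\emptyset}\mid v_t)=\mathbbm{1}[\emptyset\neq\emptyset]\cdot v_t-0=0$. If $a=n$ is a node, then $u(n,R_{\emptyset}\mid c_n)=0-c_n(\emptyset)=0$, using the standing assumption $c_n(\emptyset)=0$. Either way $u(a,R_{\emptyset}\mid\theta_a)=0\ge 0$.

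In the second case, $R=R^*$, the routing of the winning broker $b^*$. By the specification of the mechanism, this branch is taken only when $R^*$ yields non-negative utility for every agent under the \emph{reported} types, i.e. $u(a',R^*\mid\theta'_{a'})\ge 0$ for all $a'\in A$. Instantiating at $a'=a$ and using $\theta'_a=\theta_a$ gives $u(a,R^*\mid\theta_a)\ge 0$, which is exactly $a$'s utility under the action profile. The only point requiring care—and the closest thing to an obstacle here—is that the mechanism's individual-rationality check is performed with respect to the reported profile $\B\theta'$, not the true types; the guarantee therefore transfers to agent $a$ precisely because $a$ reports truthfully, which is why the claim holds for arbitrary $\B\theta'_{-a}$ and arbitrary broker actions $\B\sigma$. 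Combining the two cases completes the proof.
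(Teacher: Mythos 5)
Your proof is correct and follows essentially the same route as the paper's: the mechanism either outputs the empty routing (under which every agent gets exactly zero utility) or a routing that has passed the non-negativity check under reported types, and truthful reporting by $a$ makes that check coincide with $a$'s true utility. Your write-up is simply a more detailed unpacking of the same argument, and your observation that the guarantee transfers precisely because $a$'s report equals its true type is exactly the key point.
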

\begin{proof}
    A transaction or node can guarantee a utility of at least zero by truthfully reporting its type. This is because the mechanism only outputs routings that result non-negative utility for transactions and nodes under the reported types; if no such proposal exists, the mechanism outputs the null allocation and zero payment rules, under which all transactions and nodes have utility zero.
\end{proof}

\begin{lemma}[Individual rationality for brokers]\label{lem:IR-broker}
For all brokers $b$ and all action profiles $(\B{\theta}',\B\sigma_{-b})$, submitting the empty routing guarantees non-negative utility for $b$:
\[
u(b,\B{\theta}',R_\emptyset,\B\sigma_{-b})=0.
\]
\end{lemma}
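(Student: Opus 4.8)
The plan is to case on whether broker $b$ ends up being the winning broker $b^*$. First I would recall from the \emph{Payments and Utilities} paragraph that a broker's utility under any action profile is exactly the payment it receives from the mechanism, and that this payment is nonzero only in the event that $b = b^*$ \emph{and} the mechanism does not fall back to the empty routing; in that event the payment equals $\gamma(R^*)$, the margin of the winning proposal. So it suffices to show that when $b$ submits $R_\emptyset$, any payment it could receive is zero.

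Next I would record two elementary facts about the empty routing $R_\emptyset$. Its margin is $\gamma(R_\emptyset) = \sum_{t\in T} 0 - \sum_{n\in N} 0 = 0 \geq 0$, so $R_\emptyset$ is always a budget-balanced (hence feasible) proposal for the $\arg\max$ defining $b^*$. Moreover, under $R_\emptyset$ every transaction has $\mathbbm{1}[\emptyset \ne \emptyset]\cdot v_t - 0 = 0$ and every node has $0 - c_n(\emptyset) = 0$, so $R_\emptyset$ yields utility exactly zero to all of $A$ regardless of reported types, and therefore never triggers the negative-utility fallback rule.

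With these in hand the argument splits. If $b \ne b^*$ then, by definition of the payment rule, $b$ receives zero, so $u(b,\B\theta',R_\emptyset,\B\sigma_{-b}) = 0$. If instead $b = b^*$ — which, since $R_\emptyset$ has surplus $\mc S(R_\emptyset \mid \B\theta') = 0$, can only occur when every other broker's budget-balanced proposal also has surplus at most zero and the fixed tie-break order selects $b$ — then $R^* = R_\emptyset$; by the second fact above the mechanism outputs $(R_\emptyset, b)$ (and in any case this coincides with the empty-routing fallback), so $b$'s payment is $\gamma(R_\emptyset) = 0$. In both cases the utility is zero, which is the claim.

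There is essentially no obstacle here: the only point requiring a moment's care is the tie-breaking scenario in the second case, where one must confirm that winning with $R_\emptyset$ cannot produce a strictly positive payout — but this is immediate because the margin of $R_\emptyset$ is identically zero. The lemma then follows directly.
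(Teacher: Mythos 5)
Your proof is correct and follows essentially the same route as the paper's: case on whether $b$ wins, and observe that the payment is $\gamma(R_\emptyset)=0$ in the winning case and $0$ otherwise. The extra verifications (budget-balance of $R_\emptyset$ and the non-triggering of the negative-utility fallback) are fine but not needed for the conclusion, since a losing broker is paid zero regardless.
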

\begin{proof}
    If broker $b$ submits $R_\emptyset$ and wins, it receives a payment of $\gamma(R_\emptyset)=0$. If $b$ does not win, it receives a payment of zero. The utility of $b$ is equal to its payment, so both cases result in zero utility for $b$.
\end{proof}

\subsection{Efficiency}
\paragraph{Welfare.} A common metric for the efficiency of an outcome is the sum of utilities of all participants.

Formally, the \emph{welfare} $\mathcal W$ of an action profile $(\B\theta',\B\sigma)$ under the type vector $\B\theta$ is
\begin{equation}\label{eq:welfare}
    \mathcal W(\B\theta',\B\sigma \mid \B\theta) := \sum_{a\in A} u(a,\B\theta',\B\sigma\mid \theta_a) + \sum_{b\in B} u(b,\B\theta',\B\sigma)
\end{equation}
\noindent
Since Resonance is strongly budget balanced and all participants are quasilinear, this is equal to the sum of valuations (minus costs) of all allocated transactions and nodes. We can therefore define the welfare of an allocation rule $\alpha$ as
\[\mc W(\alpha\mid \B\theta)=\sum_{t\in T(\alpha)} \theta_t - \sum_{n\in N(\alpha)} \theta_n(\alpha^{-1}(n)).\]
The welfare of a routing $R$ is the welfare of its allocation rule.

A welfare-maximizing strategy profile guarantees that the most economic value is generated by the resulting routing, but provides no information about how that value is distributed among different players. In particular, it is possible for a strategy profile to be welfare-maximizing and result in zero utility for all transactions and nodes (but positive utility for brokers). For our application (at Ritual), the primary efficiency objective is to maximize the economic surplus for transactions and nodes, while paying brokers the minimal amount necessary to offset their expenses. 

\paragraph{Surplus.} The \emph{surplus} $\mathcal S$ of an action profile $(\B\theta',\B\sigma)$ is the sum of utilities of transactions and nodes in the resulting routing:
\[\mathcal S(\B\theta',\B\sigma \mid \B\theta) := \sum_{a\in A} u(a,\B\theta',\B\sigma\mid \theta_a) . \]
Observe that surplus is equal to the first sum in \cref{eq:welfare}; the welfare of a routing is the sum of its surplus and its margin. Maximizing surplus guarantees \emph{both} that the most economic value is generated \emph{and} that this value is directed back to transactions and nodes.

The efficiency guarantees of Resonance hold \emph{in equilibrium}, which we define below as a refresher.

\begin{definition}[PNE]\label{def:pne}
An action profile $(\B\theta',\B\sigma) \in \B\theta\times\Sigma$ is a \emph{Pure Nash Equilibrium (PNE)} under type vector $\B\theta$ if
\begin{itemize}[wide=0.5em, leftmargin =*]
    \item for all agents $a\in A$, $\theta'_a$ is a best response to $(\B\theta'_{-a},\B\sigma)$, i.e.
    \[
        u(a,  \B\theta',\B\sigma\mid \theta_a) \ge u(a, \theta''_a,\B\theta'_{-a},\B\sigma \mid \theta_a) \enskip\enskip \forall \enskip\theta''_a \in \Theta_a;
    \]
    \item for all brokers $b\in B$, $\sigma_b$ is a best response to $(\B\theta',\B\sigma_{-b})$, i.e.
    \[
        u(b,  \B\theta',\B\sigma) \ge u(b, \B\theta',\sigma'_b,\B\sigma_{-b}) \enskip\enskip \forall \enskip\sigma'_b \in \Sigma_b.
    \]
\end{itemize}
\end{definition}
\noindent
In words, at a PNE, no one has an incentive to unilaterally deviate from their strategy. 
Not all games have PNEs, but Resonance has many. In particular, it follows as a corollary of \cref{thm:IC} that Resonance has PNEs.




We show in \cref{thm:welfare-1} and \cref{thm:welfare-2} that the PNEs of Resonance where transactions and nodes report their types truthfully are welfare-maximizing. Furthermore, while these PNEs achieve zero surplus if there is only one broker (\cref{thm:welfare-1}), they maximize surplus when there are there are at least two competing brokers (~\cref{thm:welfare-2}).

\begin{theorem}[Efficiency with one broker]\label{thm:welfare-1}
    Fix a type vector $\B\theta$ for transactions and nodes, and let $(\B\theta,\sigma_b)$ be a PNE of Resonance when $B=\set{b}$. Then
    \begin{itemize}[wide=0.5em, leftmargin =*]
        \item $(\B\theta,\sigma_b)$ is welfare-maximizing among all valid allocations: \[\mc W(\B\theta,\sigma_b\mid \B\theta)=\max_{\alpha\in V} \mc W(\alpha \mid \B\theta).\]
        \item $(\B\theta,\sigma_b)$ achieves zero surplus: $\mc S(\B\theta,\sigma_b\mid \B\theta)=0$.
    \end{itemize}
\end{theorem}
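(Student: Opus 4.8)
The plan is to reduce the whole statement to one fact: with $B=\set b$ and transactions and nodes reporting truthfully, the broker's best-response payoff equals $W^\star := \max_{\alpha\in V}\mc W(\alpha\mid\B\theta)$. This maximum is attained (finitely many allocations) and $W^\star\ge 0$ since the empty allocation is valid with welfare $0$. Given this fact, both bullets drop out of strong budget-balance.

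First I would establish the upper bound. A broker submission $R=(\alpha,\pi,\phi)$ earns the broker its margin $\gamma(R)$ precisely when $R$ is budget-balanced and gives every transaction and node non-negative utility under the truthful reports; otherwise the mechanism outputs the empty routing and the broker gets $0\le W^\star$. For such an $R$, individual rationality forces $\pi(t)\le v_t$ for $t\in T(\alpha)$, $\pi(t)=0$ for $t\notin T(\alpha)$, and $\phi(n)\ge c_n(\alpha^{-1}(n))$ for every node $n$; summing and using $c_n(\emptyset)=0$ gives $\gamma(R)=\sum_{t}\pi(t)-\sum_{n}\phi(n)\le\sum_{t\in T(\alpha)}v_t-\sum_{n\in N(\alpha)}c_n(\alpha^{-1}(n))=\mc W(\alpha\mid\B\theta)\le W^\star$. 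Hence no submission beats $W^\star$. To show this value is achieved, I would exhibit the ``reservation-price'' routing on a welfare-optimal valid allocation $\widehat\alpha$: the routing $\widehat R=(\widehat\alpha,\widehat\pi,\widehat\phi)$ with $\widehat\pi(t)=v_t$ on $T(\widehat\alpha)$ and $0$ elsewhere, and $\widehat\phi(n)=c_n(\widehat\alpha^{-1}(n))$, gives every transaction and node utility exactly $0$ (so it passes the IR check), has margin $\mc W(\widehat\alpha\mid\B\theta)=W^\star\ge0$ (so it is budget-balanced), and, being the lone broker's submission, wins and is paid $\gamma(\widehat R)=W^\star$. Thus the broker's best-response payoff is exactly $W^\star$.

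To conclude, note that at the PNE the broker best-responds and so is paid $W^\star$. If the mechanism does not output the empty routing, its output is the broker's submission $R_b=(\alpha,\pi,\phi)$ and the broker's payment is $\gamma(R_b)=W^\star$; the inequality established above then gives $\mc W(\alpha\mid\B\theta)\ge\gamma(R_b)=W^\star$, so $\mc W(\B\theta,\sigma_b\mid\B\theta)=\mc W(\alpha\mid\B\theta)=W^\star=\max_{\alpha\in V}\mc W(\alpha\mid\B\theta)$, which is the first bullet; and since the welfare of a routing is the sum of its surplus and its margin (equivalently by \cref{lem:BB}), $\mc S(\B\theta,\sigma_b\mid\B\theta)=W^\star-\gamma(R_b)=0$, the second bullet. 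If instead the mechanism outputs the empty routing, the broker is paid $0$, forcing $W^\star=0$; the empty output then has welfare $0=W^\star$ and surplus $0$, so both bullets hold. The main thing requiring care is the reduction itself --- recognizing that maximizing the broker's margin subject to budget-balance and individual rationality is exactly the welfare-maximization problem $\max_{\alpha\in V}\mc W(\alpha\mid\B\theta)$, with the optimum realized by take-it-or-leave-it offers at reservation prices --- together with the sign bookkeeping in the degenerate cases ($W^\star=0$, submissions the mechanism discards, and the broker submitting $R_\emptyset$).
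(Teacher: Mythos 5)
Your proof is correct and follows essentially the same route as the paper's: the two key ingredients (margin of any accepted routing is bounded by the welfare of its allocation, and a reservation-price routing on a welfare-optimal allocation extracts exactly that welfare as margin) are precisely the paper's Claims 4.1 and 4.2, merely repackaged as a direct computation of the broker's best-response value $W^\star$ rather than two separate contradiction arguments. A minor point in your favor: you explicitly note $W^\star\ge 0$ (via the empty allocation) to certify that the reservation-price routing is budget-balanced, a detail the paper's Claim 4.2 leaves implicit.
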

\begin{proof}
    We start by showing the following claims.

    \begin{claim}\label{claim:marginleqwelfare}
       If nodes and transactions report their types honestly, the margin of the output routing is upper bounded by the welfare of that routing. 
    \end{claim}
    \begin{proof}
        Let $R$ be the output routing under the action profile $(\B\theta,\B\sigma_B)$. We have $\mc W(R\mid \B\theta) =\mc S(R\mid \B\theta)+ \gamma(R)$, so it suffices to show that $\mc S(R\mid \B\theta)\geq 0$.
        
        If there was no winning broker, the mechanism outputs the empty allocation with zero payments, so $\mc S(R\mid \B\theta)=\mc W(R\mid\B\theta)=0$. Otherwise, there is a winning broker $b$ who submitted $R$. Since $R$ was accepted, it must have produced non-negative utility for all transactions and nodes under $\B\theta$. It follows that $R$ has non-negative surplus.
    \end{proof}

    \begin{claim}\label{claim:max-extract-routing}
        For any valid allocation rule $\alpha$, there exists a budget balanced routing with margin $\mc W(\alpha\mid\B\theta).$
    \end{claim}
    \begin{proof}
        This can be done by setting the allocation rule to $\alpha$, the payment of every transaction $t$ allocated by $\alpha$ to $v_t$, and the payment to every node $n$ to $c_n(\alpha^{-1}(n))$.
    \end{proof}

    We will now prove the first bullet. Let $R$ be the routing output by the mechanism under $(\B\theta,\sigma_b)$ (which was either submitted by $b$, or is the empty routing). The utility of $b$ under $(\B\theta,\sigma_b)$ is zero if $b$ does not win, and $\gamma(R)$ otherwise. It follows that 
    \begin{equation}\label{eq:utilleqwelfare}
        u(b,(\B\theta,\sigma_b)\mid\B\theta) \leq  \gamma(R) \leq \mc W(R\mid\B\theta),
    \end{equation}
    where the second inequality follows from \cref{claim:marginleqwelfare}.
    
    Assume for the sake of contradiction that $R$ is not welfare-maximizing, and let $\alpha$ be a welfare maximizing valid allocation rule with strictly higher welfare. By \cref{claim:max-extract-routing}, there is a budget-balanced routing $R'$ whose margin is $\mc W (\alpha\mid\B\theta)$. If $b$ proposes $R'$ instead of $R$, $b$ will win and collect a payment of $\mc W(\alpha\mid\B\theta)$. That is, 
    \[u(b,\B\theta,R'\mid \B\theta)=\mc W(\alpha\mid\B\theta) > \mc W(R\mid\B\theta) \geq u(b,\B\theta,\sigma_b\mid\B\theta),\]
    where the first inequality follows by definition of $\alpha$, and the second inequality follows from \cref{eq:utilleqwelfare}. This contradicts the best-response condition for $b$, concluding the proof of the first bullet.
    
    Next, we show the second bullet. Again, let $b$ be the only broker. Let $R$ be the routing output by the mechanism under $(\B\theta,\sigma_b)$ with allocation rule $\alpha$. We already know from the previous bullet that $\alpha$ is welfare-maximizing. Suppose for the sake of contradiction that $\mc S(R\mid \B\theta)\ne 0$. This means $R$ cannot be the empty routing, and therefore must have been proposed by $b$. Therefore, $u(b,\B\theta,\B\sigma\mid \B\theta)=\gamma(R)=\mc W(\alpha\mid \B\theta)-\mc S(R\mid \B\theta)$. However, consider the budget-balanced routing $R'$ with allocation rule $\alpha$ and margin $\mc W(\alpha\mid\B\theta)$ (which exists by \cref{claim:max-extract-routing}). If $b$ proposes $R'$ instead of $R$, it would win and get a payment of $\mc W(\alpha\mid\B\theta)$. That is,
    \[u(b,\B\theta,R'\mid\B\theta)=\mc W(\alpha\mid\B\theta) > W(\alpha\mid \B\theta)-\mc S(R\mid \B\theta) = u(b,\B\theta,\sigma_b\mid\theta),\]
    where the inequality follows from the assumption that $R$ has non-zero surplus. Bullet two follows.
\end{proof}

\begin{theorem}[Efficiency with multiple brokers]\label{thm:welfare-2}
    Fix a type vector $\B\theta$ for transactions and nodes, and let $\B\sigma = (\B\theta,\B\sigma_B)$ be a PNE of Resonance with $|B|\geq 2$. Then
    \begin{itemize}[wide=0.5em, leftmargin =*]
        \item $(\B\theta,\B\sigma)$ is welfare-maximizing among all valid allocations:\\ $\mc W(\B\theta,\B\sigma\mid \B\theta)=\max_{\alpha\in V} \mc W(\alpha \mid \B\theta)$.
        \item $(\B\theta,\B\sigma)$ is surplus-maximizing among all weakly budget-balanced routings:\\ $\mc S(\B\theta,\B\sigma\mid \B\theta)=\max\set{\mc S((\alpha,\pi,\phi)\mid \B\theta) \mid \alpha\in V \text{ and $(\alpha,\pi,\phi)$ is weakly budget balanced} }.$
    \end{itemize}
\end{theorem}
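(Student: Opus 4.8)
The plan is to adapt the one-broker argument of \cref{thm:welfare-1}, now exploiting that with $|B|\ge 2$ there is always a broker earning zero utility, whose best-response condition we can violate unless the accepted routing already returns the entire achievable welfare $W^* := \max_{\alpha\in V}\mc W(\alpha\mid\B\theta)$ to transactions and nodes as surplus. Write $R$ for the output routing. Since the excerpt records that the welfare of a routing equals its surplus plus its margin, every weakly budget-balanced routing $(\alpha,\pi,\phi)$ satisfies $\mc S(\alpha,\pi,\phi\mid\B\theta)=\mc W(\alpha\mid\B\theta)-\gamma(\alpha,\pi,\phi)\le\mc W(\alpha\mid\B\theta)\le W^*$, so $W^*$ is exactly the maximum appearing in the second bullet, and the two bullets together are equivalent to the single statement $\mc W(R\mid\B\theta)=W^*$ and $\gamma(R)=0$. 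That is what I would prove.

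First I would assemble the toolkit. \cref{claim:marginleqwelfare} (under truthful reports the output routing has nonnegative surplus) carries over verbatim. I would strengthen \cref{claim:max-extract-routing} to: for every valid $\alpha$ with $\mc W(\alpha\mid\B\theta)\ge 0$ and every target $s\in[0,\mc W(\alpha\mid\B\theta)]$, there is an individually rational, budget-balanced routing with allocation $\alpha$, surplus exactly $s$, and margin $\mc W(\alpha\mid\B\theta)-s$; one pays every node exactly its cost (node utility $0$) and charges the allocated transactions a total of $\sum_{t\in T(\alpha)}v_t-s$, split so that each pays at most its own value, which is feasible since $0\le s\le\mc W(\alpha\mid\B\theta)\le\sum_{t\in T(\alpha)}v_t$ and which makes the transaction utilities nonnegative and sum to $s$. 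I would then clear the degenerate case $W^*=0$: there $\mc W(R\mid\B\theta)\le 0$ while $\mc S(R\mid\B\theta)\ge 0$ (by \cref{claim:marginleqwelfare}) and $\gamma(R)\ge 0$, forcing all three to equal $0$. So from now on assume $W^*>0$ and fix $\alpha^*$ attaining it.

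The heart of the argument is to show the accepted routing extracts all of $W^*$. Let $R^*$ be the selected broker's proposal and $S^*:=\mc S(R^*\mid\B\theta)$, which is the largest surplus among submitted budget-balanced proposals and satisfies $S^*\le W^*$. Because $|B|\ge 2$, fix a broker $b'$ with zero utility — any non-winner, or any broker at all if the output is the empty routing. If $S^*<W^*$, let $b'$ deviate to the routing of the strengthened claim with allocation $\alpha^*$ and surplus $S^*+\epsilon$ for a small $\epsilon\in(0,W^*-S^*)$: it is individually rational and budget-balanced (its margin is $W^*-S^*-\epsilon>0$), and its surplus strictly exceeds the previous maximum $S^*$, so $b'$ becomes the unique selected broker, its routing — being individually rational — is accepted, and $b'$ is paid $W^*-S^*-\epsilon>0$, contradicting its best-response condition. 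Hence $S^*=W^*$, and the displayed inequality then forces the allocation of $R^*$ to have welfare $W^*$ and $\gamma(R^*)=0$; combined with the (separate) fact that at equilibrium the output is not the empty routing — shown by the same deviation — this gives $R=R^*$, $\mc W(R\mid\B\theta)=W^*$, and $\gamma(R)=0$, which is both bullets.

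The step I would work through most carefully is the coupling of the tie-breaking rule with the empty-output branch of the mechanism. One has to check that the $\epsilon$-undercut really makes the deviator the unique maximizer (true because $S^*$ is by definition the current maximum over the admissible proposals, so every other proposal has surplus at most $S^*<S^*+\epsilon$) and that its proposal is genuinely accepted rather than discarded, so that the payoff $W^*-S^*-\epsilon$ is actually collected; and one has to confirm that the mechanism outputting the empty routing cannot itself be an equilibrium when $W^*>0$, again by exhibiting a broker who can deviate to a winning, individually rational, positive-margin routing supported on $\alpha^*$. Everything else is bookkeeping with the identity surplus $=$ welfare $-$ margin.
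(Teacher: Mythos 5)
Your overall strategy is the same as the paper's: with $|B|\ge 2$ there is always a broker earning zero, and unless the accepted routing already hands all of $W^*=\max_{\alpha\in V}\mc W(\alpha\mid\B\theta)$ back as surplus, that broker can undercut. But your execution of the key deviation is actually \emph{more} careful than the paper's. The paper's proof of the second bullet splits into two cases, and in the case where the competing routing $R'$ has the same (zero) margin as the output, it builds the deviation $R''$ by \emph{lowering} one transaction's payment and then asserts $R''$ has ``strictly higher margin'' --- but lowering a payment lowers the margin, so $R''$ has margin $-\delta<0$, is not budget-balanced, and would never be selected by the mechanism. Your strengthened version of \cref{claim:max-extract-routing} (an individually rational, budget-balanced routing on $\alpha^*$ with any prescribed surplus $s\in[0,W^*]$ and margin $W^*-s$) repairs exactly this: the deviator's proposal is guaranteed to be budget-balanced, to strictly win on surplus, and to pass the IR screen, so the positive margin is actually collected. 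That construction, together with your observation that both bullets reduce to $\mc W(R)=W^*$ and $\gamma(R)=0$, is correct and cleaner than what is in the paper.

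The gap is the sentence claiming that ``at equilibrium the output is not the empty routing --- shown by the same deviation.'' It is not: your deviation needs a budget-balanced proposal of surplus $S^*+\epsilon$, which exists only while $S^*<W^*$. Consider $S^*=W^*$ with $R^*$ failing the IR check. Concretely, take two transactions with value $10$, one zero-cost node, two brokers; both brokers submit the welfare-maximizing allocation with zero margin, but the tie-break winner's routing charges $t_0$ eleven and pays it to the node (surplus $20=W^*$, IR violated for $t_0$), while the loser's routing is IR. The mechanism outputs the empty routing and everyone gets zero; yet no broker can deviate profitably (any budget-balanced proposal has surplus at most $W^*$, and surplus exactly $W^*$ forces margin zero, so winning is worthless and anything less loses), and no agent's report can change the tie-break or cure $t_0$'s IR failure. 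So the missing step is not merely unproven --- it is false, and the theorem as stated fails at such equilibria. The paper's own proof never confronts this because it opens with ``let $R$ be the output routing and call $b$ the broker who submitted it,'' silently assuming the output is a broker's proposal rather than the empty fallback. You correctly isolated the one step that needs an argument; it just does not have one without either excluding these equilibria or modifying the mechanism's rejection rule.
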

\begin{proof}
    We start by showing the following claim.
    \begin{claim}\label{claim:zero-margin}
    When $|B|\geq 2$, the output routing under $(\B\theta,\B\sigma)$ has zero margin.
    \end{claim}
    \begin{proof}
        Let $R=(\alpha,\pi,\phi)$ be the output routing under $(\B\theta,\B\sigma)$, and call $b$ be the broker who submitted it. Suppose $R$'s margin is not zero. The margin of $R$ cannot be negative since that would result in negative utility for $b$, which is strictly dominated by submitting the empty routing. Therefore, $R$ must have some strictly positive margin $\epsilon$. This means there exists some transaction $t$ whose payment $\pi(t)$ under $R$ is strictly positive. Let $\delta$ be a real number in the interval $\big(0,\min\set{\epsilon,\pi(t)}\big)$.
    
        We will construct a new routing $R'$ that is identical to $R$ except that its payment for $t$ is $\pi(t)-\delta$. The routing $R'$ is budget-balanced, has positive margin ($\gamma(R')=\epsilon-\delta>0$), and its surplus exceeds $R$'s surplus ($\mc S(R'\mid\B\theta) = \mc S(R\mid \B\theta) + \delta$).
        Let $b'\not=b$ be another broker who is not selected under $(\B\theta,\B\sigma)$. By submitting $R'$ instead of its current proposal $\sigma_{b'}$, broker $b'$ could win and collect a payment of $\gamma(R')$ instead of zero. This contradicts the fact that $(\B\theta,\B\sigma)$ is a Nash Equilibrium.
    \end{proof}
    
    We will show the second bullet first. Let $R$ be the output routing under $(\B\theta,\B\sigma)$, and call $b$ be the broker who submitted it. We know $R$ is budget-balanced by \cref{lem:BB}.
    
    Assume for the sake of contradiction that there exists an alternative budged-balanced routing $R'=(\alpha',\pi',\phi')$ that achieves strictly higher surplus than $R$. We know from \cref{claim:zero-margin} that $R$ has zero margin, and $R'$ is budget-balanced so it has non-negative margin. Therefore, the margin of $R'$ is weakly greater than the margin of $R$. We consider two cases: 

    \begin{case*} 
    [$R'$ has strictly higher margin than $R$]
    This contradicts the fact that $(\B\theta,\B\sigma)$ is a Nash equilibrium, since broker $b$ could submit $R'$ instead of $R$ and still win, strictly increasing its utility due to the higher margin.
    \end{case*}
    \begin{case*}[$R$ and $R'$ have equal margin]
        The proof of this case is similar to the proof of \cref{claim:zero-margin}.
        We know $R'$ has strictly positive surplus (call it $\epsilon$), so it must assign at least one transaction $t$ with valuation $\theta_t> 0$ (to a non-empty set of nodes).  We will construct a new routing $R''$ that is identical to $R'$ except that its payment for $t$ is $\pi'(t)-\delta$ for some $\delta \in(0,\min\set{\epsilon,\pi'(t)})$. Routing
        $R''$ has strictly higher surplus and strictly higher margin than $R$, meaning any broker $b'\not= b$ is strictly better off by deviating from $(\B\theta,\B\sigma)$ to reporting $R''$. This contradicts the best response condition for $b'$ in the Nash Equilibrium $(\B\theta,\B\sigma)$.
    \end{case*}
    \noindent
    It follows that $R$ is a surplus-maximizing routing for the type vector $\B\theta$, completing the proof of the second bullet.

    It remains to show the first bullet. We just showed that the output routing $R$ under $(\B\theta,\B\sigma)$ is surplus-maximizing, and we know from \cref{claim:zero-margin} that it has zero margin. Since the welfare of a routing is the sum of its surplus and margin, it follows that $R$ is also welfare-maximizing, completing the proof.
\end{proof}

\subsection{Incentive Compatibility}

In general, incentive compatibility is a desired property since it simplifies the experience of participants by limiting the amount of strategization they need to do. While broker strategization is desirable to aide in discovering high-welfare routings, it is important that this does not come at the cost of complexity for transactions and nodes, who are often unsophisticated agents. The strongest version of incentive compatibility is \emph{dominant-strategy incentive compatibility (DSIC)}, which roughly states that it is optimal for each transaction and node to report their types truthfully, regardless of the types and actions of other participants.

\begin{definition}[DSIC]\label{def:dsic}
    Let $a\in A$. We say Resonance is \emph{dominant-strategy incentive compatible (DSIC) for $a$} if for all types $\theta_a\in\Theta_a$ and all strategies $\B\theta'_{-a}\in \B\Theta_{-a},\B\sigma\in\B\Sigma$ of other agents, $\theta_a$ is a best response to $(\B\theta'_{-a},\B\sigma)$ when $a$ has type $\theta_a$. 
\end{definition}

This strong notion of incentive compatibility is in general difficult to achieve (see, \textit{e.g.}, \cite{green-laffont-1979,MS}), especially when the strategy space of participants is sufficiently rich. In particular, Resonance is not DSIC for transactions and nodes when there are multiple brokers.
We will instead consider the following (weaker) version of incentive-compatibility. It states that truthful bidding is optimal for transactions and nodes assuming brokers play a fixed strategy. This means transactions and nodes do not need to reason about the types and strategies of \emph{other transactions and nodes} as long as brokers play certain equilibrium strategies.


\begin{definition}[DSIC barring $B$]\label{def:dsic-barring-s}
    We say Resonance mechanism is \emph{DSIC barring $B$} if for all type vectors $\B\theta\in \B\Theta$, there exists a broker action profile $\B\sigma\in\B\Sigma$ such that
    \begin{itemize}[wide=0.5em, leftmargin =*]
        \item for all agents $a\in A$, all types $\theta_a\in \Theta_a$, and all actions $\B\theta'_{-a}\in \B\Theta_{-a}$ of other agents in $A$, $\theta_a$ is a best response to $(\B\theta'_{-a},\B\sigma)$ when $a$ has type $\theta_a$; and
        \item $(\B \theta, \B \sigma)$ is a PNE.
    \end{itemize}
\end{definition}
\noindent
Here is one way of interpreting \cref{def:dsic-barring-s}.
Fixing the actions $\B\sigma$ of brokers in Resonance induces a subgame between transactions and nodes. The first bullet in \cref{def:dsic-barring-s} requires that, for \emph{some} $\B\sigma$, the induced subgame is DSIC for all transactions and nodes. The second bullet requires that $\B\sigma$ forms an equilibrium with honest bids by transactions and nodes.
In other words, \cref{def:dsic-barring-s} requires that in a subset of the PNEs of Resonance, transactions and nodes do not need to strategize with respect to the actions of other transactions and nodes. We will characterize the particular subset of PNEs for which this definition holds in \cref{thm:IC}, and then argue why these are a natural subset at the end of this section.


The following theorem outlines the incentive properties of Resonance. When there is only one broker, we show that Resonance is DSIC for transactions and nodes. Intuitively, having one broker essentially creates a posted price experience for transactions and nodes, each of whom must choose between the broker's proposal and the empty routing. 
When there are multiple brokers, transactions and nodes no longer face a take-it-or-leave-it offer if brokers submit routings \emph{with different allocations}. Since the mechanism's process of selecting an allocation relies on the submissions of transactions and nodes, there can be beneficial strategization by transactions and nodes who prefer one broker's allocation over another's. If all brokers were guaranteed to submit the same allocation, we would recover a (conditional) posted-price experience for transactions and nodes. This is captured formally in the property of DSIC barring $B$, which Resonance satisfies even in the case of multiple brokers.

\begin{theorem}\label{thm:IC}
Resonance is DSIC for transactions and nodes when $|B|=1$, and DSIC barring $B$ when $|B|\geq 2$.
\end{theorem}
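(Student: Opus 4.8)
The plan is to prove the two halves of the statement separately, since they require genuinely different arguments.

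\emph{Case $|B| = 1$.} First I would fix a transaction or node $a$ with true type $\theta_a$, fix arbitrary reports $\B\theta'_{-a}$ of the other agents in $A$, and fix the single broker's strategy $\sigma_b = R_b$. Since there is only one broker, the mechanism's output is determined entirely by whether $R_b$ is budget-balanced and whether it clears the individual-rationality check under the reported type profile. The key observation is that $a$'s report $\theta'_a$ affects the outcome \emph{only} through the IR check: the winning broker is always $b$ (if $R_b$ is budget-balanced) regardless of $a$'s report, and the allocation/payments in $R_b$ are fixed in advance. So $a$ faces a binary menu: either $R_b$ is accepted (and $a$ gets $u(a, R_b \mid \theta_a)$, which does not depend on $\theta'_a$ except through whether acceptance happens) or the empty routing is output (and $a$ gets $0$). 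I would then argue that reporting $\theta_a$ truthfully is optimal: if $u(a, R_b \mid \theta_a) \ge 0$, truthful reporting ensures $a$ passes its own part of the IR check, so $R_b$ is accepted if and only if every \emph{other} agent also passes — and $a$ cannot do better than having $R_b$ accepted in that case; if $u(a, R_b \mid \theta_a) < 0$, truthful reporting causes the IR check to fail (at least on account of $a$), forcing the empty routing and giving $a$ exactly $0$, which weakly dominates the negative payoff from acceptance. The one subtlety is that a transaction with $\alpha(t)=\emptyset$ in $R_b$ has utility $-\pi_b(t)$; the mechanism restricts payment rules to $\mathbb{R}_{\ge 0}$, so this is $\le 0$, and truthful reporting still (weakly) lets $a$ steer toward the better of the two menu options. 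I'd need to handle the edge case where $R_b$ is not budget-balanced (then the empty routing is always output and every report is trivially a best response).

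\emph{Case $|B| \ge 2$.} Here I must exhibit, for every type vector $\B\theta$, a broker profile $\B\sigma$ satisfying both bullets of \cref{def:dsic-barring-s}. The natural candidate — and the one I'd argue is the ``natural'' subset of PNEs promised in the text — is to have \emph{every} broker submit the same routing $R^\circ$, namely the routing with a welfare-maximizing valid allocation $\alpha^\circ$, transaction payments $\pi(t) = v_t$ for $t \in T(\alpha^\circ)$, and node payments $\phi(n) = c_n((\alpha^\circ)^{-1}(n))$ — i.e., the full-surplus-extracting routing from \cref{claim:max-extract-routing}, which has margin $\mc W(\alpha^\circ \mid \B\theta)$ and zero surplus. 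Actually, to make honest reporting a genuine best response and to get the claimed equilibrium, I'd instead want each broker to submit a routing achieving zero margin and surplus equal to $\max_{\alpha \in V}\mc W(\alpha\mid\B\theta)$ against the honest reports (consistent with \cref{thm:welfare-2}); the precise choice of the common routing $R^\circ$ matters and I'd pin it down so that it IR-clears under $\B\theta$ and is optimal among zero-margin routings. Given that all brokers submit the identical $R^\circ$: (i) tie-breaking picks a fixed broker, so no broker can gain by deviating to a budget-balanced routing of weakly higher surplus-at-honest-reports (this is exactly the \cref{claim:zero-margin}-style argument and shows $(\B\theta,\B\sigma)$ is a PNE, giving the second bullet); and (ii) since all brokers propose the \emph{same} allocation, a transaction or node's report cannot change which allocation is implemented — it can only affect the IR check, reducing exactly to the posted-price situation of the $|B|=1$ case, so truthful reporting is a dominant strategy in the induced subgame (the first bullet).

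\emph{Main obstacle.} The delicate part is the second bullet for $|B| \ge 2$: verifying that the common-routing profile $\B\sigma$ is actually a PNE. I need every broker's deviation to be unprofitable, and a broker could deviate to a routing with a \emph{different} allocation, possibly one that, combined with honest reports, has higher surplus or margin. The content of \cref{thm:welfare-2} and \cref{claim:zero-margin} is precisely that no such profitable deviation exists once the incumbent routing already extracts full welfare as margin (any higher-margin routing would have to over-extract and violate IR, any equal-margin higher-surplus routing can be undercut — but here it's the incumbent that must be non-deviable, so I need $R^\circ$ to already be margin-maximal, i.e. margin $=\max_\alpha \mc W(\alpha\mid\B\theta)$). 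So I should actually choose $R^\circ$ to be the \emph{full-extraction} routing (margin $=\max_\alpha\mc W$, surplus $0$), not a zero-margin one, for the PNE to hold — and then separately check honest reporting is a best response against it, which it is by the posted-price argument since a truthful agent gets utility $0$ either way and cannot do better. I'd reconcile this with the ``DSIC barring $B$'' statement by noting \cref{def:dsic-barring-s} only requires \emph{some} such $\B\sigma$, and the full-extraction common routing is the cleanest witness; I would then close by remarking why this is a ``natural'' equilibrium (it is the limit of broker competition and the unique symmetric one).
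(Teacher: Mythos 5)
Your \(|B|=1\) argument is essentially the paper's: the agent faces a binary menu between the fixed \(R_b\) and the empty routing, and truthful reporting always steers to the better option; that half is fine. The genuine gap is in your final choice of witness profile for \(|B|\ge 2\). You settle on \(R^\circ\) being the \emph{full-extraction} routing (margin \(\max_{\alpha\in V}\mc W(\alpha\mid\B\theta)\), surplus \(0\)), on the grounds that the incumbent must be ``margin-maximal'' to be non-deviable. But the mechanism selects the \emph{highest-surplus} budget-balanced proposal, not the highest-margin one, so this profile is not a PNE whenever \(\max_{\alpha}\mc W(\alpha\mid\B\theta)>0\): a \emph{losing} broker (which exists since \(|B|\ge 2\)) can deviate to the same allocation with one transaction's payment reduced by a small \(\delta>0\), obtaining surplus \(\delta>0\) (strictly beating the incumbent's surplus of \(0\), so it wins regardless of tie-breaking, and it still passes the IR check) and margin \(\max_{\alpha}\mc W(\alpha\mid\B\theta)-\delta>0\) --- a strictly profitable deviation from its current payoff of \(0\). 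This is precisely the undercutting move behind \cref{claim:zero-margin}, which shows that at any PNE with \(|B|\ge 2\) the output routing must have \emph{zero} margin; your witness therefore fails the second bullet of \cref{def:dsic-barring-s}.

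The correct witness --- and the one the paper uses --- is the option you considered and then talked yourself out of: every broker submits the same zero-margin, surplus-maximizing routing (whose allocation is welfare-maximizing). The first bullet goes through exactly as you describe: a common allocation across all proposals means an agent's report can only toggle the IR check, reducing to the posted-price situation of \(|B|=1\). For the second bullet, the incumbent is protected not by holding a large margin but by holding maximal surplus: any deviation \(R\) with \(\gamma(R)>0\) satisfies \(\mc S(R\mid\B\theta)=\mc W(R\mid\B\theta)-\gamma(R)<\max_{\alpha\in V}\mc W(\alpha\mid\B\theta)=\mc S(\sigma_{b'}\mid\B\theta)\) for any non-deviating broker \(b'\), so the deviation loses and yields utility \(0\). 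Hence every broker earns \(0\) at this profile and no unilateral deviation is profitable.
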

\begin{proof}
    First, we consider the case where there is only one broker. 
    %
    %
    %
    Let $a$ be a transaction or node with type $\theta_a$, and fix the strategies of other agents $(\B\theta'_{-a},\B\sigma)$. Denote by $R_b=\sigma_b$ be the routing submitted by the broker. No matter what $a$ plays, the possible outcomes of the mechanism are either $R_b$ or the empty routing with zero payments. In the former case, and $a$ achieves utility $u(a,R_b\mid \theta_a)$, and in the latter case it achieves utility 0.

    Let $R$ be the output of the mechanism when $a$ plays honestly.
    \begin{case*}[$R=R_b$]
        If $R=R_b$, we must have $u(a,R_b\mid \theta_a)\geq 0$, since Resonance checks that its output does not result in non-negative utility according to the reported types of all participants. This means $a$'s utility is the higher of its two possible values, so $\theta_a$ is a best response to  $(\B\theta'_{-a},\B\sigma)$ and we are done.
    \end{case*}
    \begin{case*}[$R\ne R_b$]
        If $R\ne R_b$, then $R$ must be the empty routing, because either $R_b$ was not budget balanced, or it resulted in negative utility for some $a'\in A$.
        \begin{itemize}[nosep,leftmargin=*]
            \item If either $R_b$ is not budget-balanced or $a\ne a'$,  the mechanism outputs the empty routing no matter what $a$ does, so $\theta_a$ is a best response.
            \item If $a=a'$, then $u(a,R_b\mid \theta_a) < 0 =u(a,R\mid \theta_a)$, so $\theta_a$ is a best response and we are done.
        \end{itemize}
    \end{case*}
    
    We will next consider the case of multiple brokers. We fix a type vector $\B\theta\in\Theta$ and construct the broker action profile $\B\sigma$ as follows. Recall that brokers know the valuations and costs of all transactions and nodes. For all brokers $b\in B$, we set $\sigma_b$ to a surplus-maximizing routing (breaking ties arbitrarily but consistently among all brokers).\footnote{\label{footnote:general-pne}In fact, it is not necessary that $\sigma_b$ is the same routing for all $b$. As long as all routings $\sigma_b$ have the same \emph{allocation} (with arbitrary and potentially different payment rules for transactions and nodes), the first bullet in the definition of ``DSIC barring $B$'' is satisfied. If in addition this allocation is welfare-maximizing and at least two broker proposals have zero margin, the second bullet in the definition of ``DSIC barring $B$'' is also satisfied. The proof as written works under the assumptions of this footnote.} Note that since the tie-breaking is consistent across brokers, all routings in $\sigma_b$ have the same allocation, which we will denote by $\alpha$. 
    
    First, let us confirm that the first bullet in \cref{def:dsic-barring-s} is satisfied.
    Let $a\in A$ with type $\theta_a$, and fix a strategy profile $\B\theta'_{-a}$ for the remaining transactions and nodes. We must show that $\theta_a$ is a best response to $(\B\theta'_{-a},\B\sigma)$.

    Denote by $R$ the routing output by Resonance under the strategy profile $\big(\theta_a, {\B\theta}'_{-a},\B\sigma)$.
    We claim that no matter what $a$ plays, the mechanism outputs either $R$ or the empty routing. This is because all routings in $\B\sigma$ have the same allocation rule, and therefore the same surplus under any type vector, and the mechanism's tie-breaking rule among surplus-maximizing allocations is independent of the submissions of participants. We have therefore reduced the problem to the case of $|B|=1$, where we have already shown the mechanism is DSIC for $a$.

    Finally, we confirm that the second bullet in \cref{def:dsic-barring-s} is satisfied. That is, we must check that $(\B\theta,\B\sigma)$ is a PNE.
    The best-response conditions for transactions and nodes follow from the first bullet. To show the best-response condition for brokers, we will show that a broker achieves non-positive utility in any unilateral deviation: For all $b$, and all strategies $R\in \Sigma$, we have $u(b, \B\theta,R,{\B\sigma}_{-b})\leq 0$.


     Suppose for the sake of contradiction that there is a proposal $R$ that results in positive utility for $b$. Then the mechanism must output $R$ under $(\B\theta, R,{\B \sigma}_{-b})$, and $R$ must have margin $\gamma(R)>0$. Let $b'\ne b$ be another broker (which exists since $|B|\geq 2$). By construction, $\sigma_{b'}$ is surplus-maximizing, so it is welfare-maximizing with zero margin, meaning 
    \begin{align*}
        \mc S (\sigma_{b'}\mid \B\theta) = \mc W (\sigma_{b'}\mid \B\theta) \geq \mc W (R\mid \B\theta) = \mc S (R\mid \B\theta) + \gamma(R)
    \end{align*}
    The right-hand-side is strictly smaller than $\sigma_{b'}$ when $\gamma(R)>0$, implying the mechanism would not output $R$, which is a contradiction.

    
        
\end{proof}

We conclude the discussion on incentives with a few remarks.
A possible criticism of \cref{thm:IC} is that the incentive guarantees in the presence of multiple brokers are relatively weak. In particular, DSIC barring $B$ holds as long as there \emph{exists} a broker action profile that induces a DSIC subgame for transactions on nodes. Why should we expect that this particular action profile will be the one brokers actually play in practice?

In response, we first point out that the proof of \cref{thm:IC} holds for a \emph{set} of broker action profiles. As explained in \cref{footnote:general-pne}, the conditions of \cref{def:dsic-barring-s} are satisfied by any broker action profile in which all broker proposals have the same welfare-maximizing allocation (albeit at different prices) and at least two brokers have zero margin. We will next argue that it is reasonable to expect brokers to play one of these action profiles in practice, since they are the action profiles that would arise as a result brokers competing by undercutting each other's margins.

This can be formalized in the following sense. Consider the broker subgame induced by Resonance after fixing the strategies of transactions and nodes to honestly reporting their types. Consider best-response dynamics in this subgame, which start with a broker action profile, and repeatedly select a broker to best-respond to the current actions of other brokers. Assuming the welfare-maximizing allocation is unique,\footnote{We expect that the welfare-maximizing allocation will be unique in practice, since the set of instances that yield multiple optimal allocations is a small fraction of all instances (\textit{e.g.} has measure zero when valuations/costs are unbounded real numbers).} and the broker margins in the starting action profile are not too small,\footnote{When capturing iterative competition among brokers, it is reasonable to assume the initial strategy of each broker is to ask for a large margin.} best-response dynamics converge precisely to the set of actions profiles where every broker plays the welfare-maximizing allocation, with two brokers having close to zero margin. This is because, whenever the current winning broker has a positive margin, the best-response by a losing broker is to submit a welfare-maximizing allocation with slightly lower margin than the current winner's. As long as the initial margins are large enough to allow for enough iterations for all brokers to be selected at least once before anyone reaches zero margin, the terminal state will include only routings with the welfare-optimal allocation. Furthermore, unless the margin of at least two brokers is within one increment away from zero, there exists a broker who is not best-responding so the dynamics have not yet converged.

Note that the only equilibrium action profiles of brokers that may violate \cref{def:dsic-barring-s} are ones where a) there is a winning broker submitting a welfare-maximizing allocation with zero margin, and b) there are some losing brokers who play different allocations. The best-response dynamics argument above aims to show that these equilibria are in some sense ``degenerate.''

\subsection{Tractability}
Resonance is simple to describe, and computationally light-weight. The computational steps involve calculating the surplus of each proposal, finding the proposal with maximum surplus, and checking that it results in non-negative utility for all agents. All of these steps can be done in time proportional to $|A|\cdot |B|$ (assuming agent types have constant description size).
\section{Practical Considerations and Future Work}\label{sec:conclusion}

As shown in Section \ref{sec:broker}, Resonance satisfies several desirable properties, including budget-balance, individual rationality, efficiency, incentive compatibility, and tractability. However, there are other concerns that are less well-studied in the academic literature but are relevant in making sure this mechanism can work well in practice. We informally outline some of these broad topics in this section; we will rigorously address them in future work. 

\paragraph{Griefing Vectors.} The mechanism as stated lends itself to a simple analysis. However, the mechanism exactly as stated is vulnerable to ``griefing'', in the sense that the cost of reducing the efficiency of the network is small. We outline two such attacks. 
\begin{enumerate}
    \item \textit{Rejection due to a Single Agent.} Under the specification as given in Section \ref{sec:broker-spec}, the mechanism outputs the empty routing if $R^*$ results in negative utility for \textit{any} transaction or node. Thus, a transaction or node with an improper bid can lead the mechanism to generate zero surplus. In a practical implementation, broker proposals that otherwise generate high surplus, but fail to achieve non-negative surplus for a particular transaction or node must be amended to yield partial surplus. It is indeed possible to lightly modify the mechanism to accomplish this while preserving the properties outlined in Section \ref{sec:broker}.
    \item \textit{Private Order Flow.} In the presence of private order flow, a malicious broker can cheaply cause the mechanism to output a low surplus matching by allocating a fake transaction that only it privately sees for a low cost. By setting the valuation of that transaction to be very high, allocating the fake transaction can yield higher elicited surplus than any other proposal, thus causing only the fake transaction to be allocated. Currently, there is no private order flow in the mechanism's implementation. Certain modifications to the mechanism that accommodate specialized brokers (as outlined next in this section, and will be addressed in  future work) are unaffected by this attack. 
\end{enumerate}

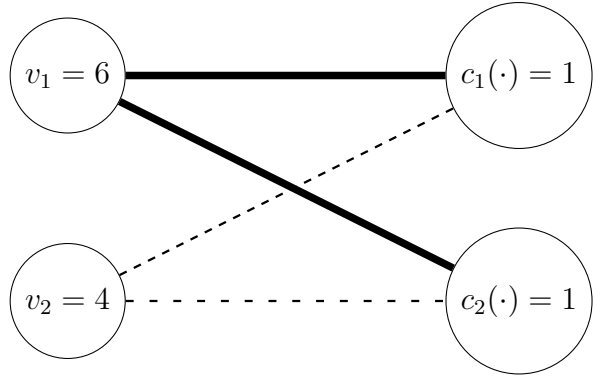
\begin{figure}[t]\label{fig:6411}
    \centering
    \begin{tikzpicture}
\node[shape=circle,draw=black] (1) at (-3,0) {$v_2 = 4$};
\node[shape=circle,draw=black] (2) at (-3,3) {$v_1 = 6$};
\node[shape=circle,draw=black] (3) at (3,0) {$c_2(\cdot) = 1$};
\node[shape=circle,draw=black] (4) at (3,3) {$c_1(\cdot) = 1$};

\path[line width=1mm, -] (2) edge (3);
\path[line width=1mm, -] (2) edge (4);

\path[line width=0.3mm, dashed] (1) edge (4);
\path[line width=0.3mm, loosely dashed] (1) edge (3);
\end{tikzpicture}
    \caption{\textbf{Nonexistence of Collusion-Resistant Routings.} Consider the above example where there are two transactions and two nodes. The first transaction (shown on the top left) requires both nodes to execute, and the second transaction (shown on the bottom left) only requires one node, and can be executed with either node. Nodes however only have capacity to execute one of the two transactions. There are therefore three valid (non-empty) allocations: one in which the first transaction is allocated to both nodes, and two more corresponding to the second transaction being allocated to either the first or second node. The surplus maximizing allocation maps the first transaction to both nodes, yielding a surplus of $6-1-1 = 4$. However, if the first transaction pays one of the nodes less than $4 = v_2$, then that node and the second transaction would prefer to collude, since the second transaction is willing to pay up to $4$. Because $4 + 4 > 6 = v_1$, the first transaction cannot pay both nodes at least four, whence no routing is collusion resistant.}
    \label{fig:collusion}
\end{figure}

\paragraph{Broker Specialization.} In the mechanism specification as outlined in Section \ref{sec:broker-spec}, brokers propose an allocation over \textit{all} transactions and nodes. If brokers specialize and are only informed about the valuations/costs of transactions and nodes of a certain type, the mechanism will not be able to make optimal use of the brokers' information. Concretely, if transactions can be partitioned into two disjoint categories, where some brokers are informed about transaction valuations of the first category and other brokers are informed of valuations of the second category, the mechanism will not be able to make use of both sets of brokers' information concurrently. If it is known to the mechanism designer up-front what such categories may exist (and validity cleanly decomposes across these categories as well), then parallel versions of Resonance of Section \ref{sec:broker} can be run over each category to maximize surplus. However, in the more general case where brokers may have information over an arbitrary subset of the transactions and nodes, it becomes necessary to allow brokers to submit partial allocations, and devise a scheme by which those partial allocations are stitched together to yield a full allocation. A modification to this mechanism accomplishes this, which we intend to outline in future work. Efficiency and incentive compatibility analysis is complicated in this regime, as it depends on the structure of overlapping specialization between competing brokers.

\paragraph{Broker Compensation.}
In our model, brokers are endowed with knowledge of the agents' type vector at no cost, and are assumed to be able to compute optimal routings given those types. In practice, both type estimation and welfare maximizing are complex and costly tasks, and brokers must be compensated accordingly to be incentivized to participate in the mechanism. Our mechanism can be extended to handle the case of non-zero brokers costs, where the amount of extracted margin in equilibrium will be reflect the cost of operating a broker. We defer the analysis of that case to future work. 

\paragraph{Collusion and Sybil Resistance.} Another concern that has not been discussed in the analysis of Section \ref{sec:broker} is collusion and sybil resistance. There are multiple formulations of these notions relevant to the blockchain setting, ranging from the standard definition of group strategyproofness from the mechanism design literature, to notions of off-chain agreements that were first defined in \cite{tfm-1559, tfm-foundations} and further studied in \cite{tfm-collusion, gafni2024barriers}. Due to the complexity of our matching setting, it is impossible to obtain the strictest form of these guarantees (see Figure \ref{fig:collusion} for an example). However, a relaxed guarantee may indeed be possible to accomplish.

\bibliographystyle{plain} 
\bibliography{bib}

\appendix
\section{On Multi-Dimensional Fee Mechanisms}\label{sec:mdfm}
This section provides a formal discussion on whether the multi-dimensional fee market design can be extended to the setting with both demand and supply heterogeneity. Mechanisms are responsible for doing two things: First, they determine the allocation, \textit{i.e.} which transactions are executed, and by which nodes (in the case that multiple nodes exist). Second, they determine the prices for execution that are paid by users, and the payments that are paid to node operators. Current literature on multi-dimensional fee markets focuses primarily on price-setting. However, the problem of finding an optimal allocation is complex in its own right. Even in a model with no supply-side cost heterogeneity and only one execution node (as in Ethereum, where this node corresponds to the chain itself), finding the optimal allocation requires solving a multi-dimensional knapsack problem \cite{multidim3}. No efficient algorithms are known to solve this problem, and unlike the one-dimensional case where a greedy approach yields a good approximation guarantee, the same is not true in the multi-dimensional setting. 

Despite these computational concerns, multi-dimensional fee markets generalize existing transaction fee mechanisms in a very simple way. Furthermore, in ``Multidimensional Blockchain Fees are (Essentially) Optimal'' \cite{multidim2} and ``Designing Multidimensional Blockchain Fee Markets'' \cite{multidim1}, it is shown that prices can be updated in a way that yields good welfare guarantees. However, this result assumes the existence of an oracle that can find the optimal allocation given a set of prices for each dimension. It is unclear how one would design an oracle (or whether such an oracle could theoretically exist); in major blockchains such as Ethereum, if multiple valid allocations exist among transactions that clear the base fee, a priority fee system is used to select the highest value transactions. However, if priority fees are used, the mechanism ceases to be \textit{incentive-compatible}, or strategically simple to engage with from the perspective of users. Is such complex user strategization or collusion necessary in order to obtain good outcomes?

To understand the limits of what economic efficiency is possible to achieve without making such assumptions, we formally separate the pricing and allocation steps of the multi-dimensional fee mechanism in Section \ref{sec:mdfm}. In the pricing step, unit base fees are set for each dimension, and all transactions that are unwilling to pay the base fee are screened out. In the allocation step, the mechanism must determine how the remaining transactions must get allocated. We consider a hierarchy of allocations:
\begin{enumerate}
    \item \textit{Inclusion-Maximal Allocations.} In an inclusion-maximal allocation, the mechanism allocates a maximal set of transactions to nodes. That is, any allocation that allocates a strict superset of the transactions that are allocated under an inclusion-maximal allocation must be invalid (\textit{e.g.} by exceeding resource constraints). However, besides maximality, these allocations need not satisfy any additional properties, such as base-fee revenue maximization. Inclusion-maximal allocations are natural to consider in settings where computation of a more advanced allocation is intractable, as would be the case if the number of dimensions in the market is high.
    \item \textit{Fee-Maximal Allocations.} In a fee-maximal allocation, the mechanism allocates transactions that maximize the total revenue from base fees. Computing a multi-dimensional fee-maximal allocation involves solving the \textit{multiple subset-sum} problem, which is NP-Hard, and further hard to approximate \cite{subset-sum}. Thus, these allocations may not be tractable to compute in more heterogeneous settings. However, base-fee revenue is a natural benchmark to consider if priority fees are not used; as the base fee is a lower bound for the transaction's valuation, a fee-maximal allocation maximizes a lower bound for transaction surplus.
    \item \textit{Oracle-Maximal Allocations.} An oracle-maximal allocation maximizes surplus given full knowledge of transaction valuations and node costs. It is unclear exactly how the mechanism would obtain such knowledge (though priority fees perhaps provide a coarse approximation), hence these allocations serve more as a theoretical benchmark. 
\end{enumerate}

We perform a worst-case analysis on the efficiency of each of these allocations assuming that a best-case pricing is chosen. That is, for each of these allocation classes, we define benchmarks that characterize the efficiency a multi-dimensional fee mechanism would achieve if pricing was done optimally (\textit{e.g.} if a perfect pricing update rule was used), but a worst-case allocation among those from the allocation class was chosen. As shown in \cref{thm:hierarchy}, under best-case pricing, the worst-case performance of inclusion-maximal allocations is weakly dominated by that of fee-maximal allocations, which is further weakly dominated by oracle-maximal allocations. 

We then focus more closely on the Ethereum setting, in which there is one node that can execute computation (the chain itself), and there are resource constraints for execution along each of $d$ dimensions. We first show in Theorem \ref{thm:of} that even under best-case pricing, inclusion-maximal allocations can obtain at most $\frac2d$ of optimal surplus in the worst case. Thus, without computing a more advanced allocation, efficiency can be arbitrarily bad as heterogeneity increases, even given best-case pricing. We show in Theorem \ref{thm:fee} that fee-maximal allocations also suffer in efficiency; for any $d \ge 1$, fee-maximal allocations can obtain at most $\frac12$ of optimal surplus in the worst case even given best-case pricing. 

Next, we consider the heterogeneous setting, where multiple nodes (\textit{e.g.} off-chain entities, or even nodes within the network) can execute computation, but incur heterogeneous unit costs for doing so. We show general impossibility of obtaining good surplus with a multi-dimensional fee mechanism in this regime. Formally, we show in Theorem \ref{thm:wo} that unless a unique dimension exists for each transaction (so that no two transactions have nonzero values across any shared dimensions), the surplus obtained under an oracle-maximal allocation may be arbitrarily poor relative to optimal even given best-case pricing. That is, even with best-case pricing and best-case allocation, the simple act of constraining the mechanism to price all transactions with a shared vector of unit base fees can cause surplus to be poor relative to optimal when there are multiple heterogeneous compute nodes. Thus, a fundamentally new type of mechanism is required to obtain good surplus in this setting.

\subsection{Analysis}
Multi-dimensional fee mechanisms clear multi-dimensional resource markets, where there is a set of distinct resources that each transaction may use, and each transaction can be associated with a vector that denotes the consumption of each resource required to execute the transaction. A multi-dimensional fee mechanism clears this resource market by assigning a vector of prices (one for each resource) that determine how much users must pay per unit usage of each resource. The mechanism need not allocate all transactions that clear this base fee, but it will reject any transactions that do not clear the base fee.  

\paragraph{Multi-Dimensional Resource Markets.} Formally, in a $d$-dimensional \textit{resource market}, we let each transaction $t \in T$ have an associated $d$-dimensional vector $\B g(t) \in \Rg^d$ with non-negative entries denoting the resource usage of the transaction along each of the $d$ dimensions. Transaction and node types are given by a vector $\B \theta$, which is omitted for clarity in what follows. There is a pre-determined set $V$ of valid allocations. We define by
\begin{equation}\label{eq:opt}
     \mathrm{OPT} := \max_{R\in\mc{R}} \set{\mc S(R)}
\end{equation}
the maximal surplus that can be attained by a valid routing for resource market.

\paragraph{Multi-Dimensional Fee Mechanisms.} A \textit{multi-dimensional fee mechanism} clears a $d$-dimensional resource market in two steps. First, in the \textit{pricing step}, the mechanism chooses a vector of unit base fees $\B{p} \in \Rg^d$. This vector denotes the unit costs/payments for each resource (i.e. for a transaction $t$, the total base fee is given by $\B g(t)^\top \B p$). Next, in the \textit{allocation step}, the mechanism chooses some allocation $\alpha$ that rejects all transactions that are unwilling to pay the base fee. That is, the mechanism chooses some allocation $\alpha \in V_{\B p}$, where
\begin{equation}\label{eq:vp}
    V_{\B p} := \set{\alpha \in V \mid \B g(t)^\top \B p > v_t \implies \alpha(t) = \emptyset \text{ for all } t \in T}
\end{equation}

\paragraph{Payment Rules.} In a multi-dimensional fee mechanism (without priority fees), after the allocation $\alpha \in V_{\B p}$ is chosen, any included transaction pays their base fee, and nodes receive the sum of the fees paid by transaction that were executed by them. That is, under a vector of unit base fees $\B p$ and the allocation $\alpha$, the transaction payment rule is given by  
\begin{equation}\label{eq:pip}
    \pi_\alpha^{\B p}(t) := \B g(t)^\top \B p \cdot \mathbbm{1}\bra{\alpha(t) \ne \emptyset}
\end{equation}
and the node payment rule is given by
\begin{equation}\label{eq:phip}
    \phi_\alpha^{\B p}(n) :=  \sum_{t \in \alpha^{-1}(n)} \B g(t)^\top\B p.
\end{equation}

Which allocation $\alpha \in V_{\B p}$ should be chosen? In what follows, we define three benchmark classes of allocations, and give a worst-case analysis for the welfare of each relative to $\mathrm{OPT}$ from \cref{eq:opt}.

\paragraph{Inclusion-Maximal Allocations.} The first benchmark class of allocations we define are \textit{inclusion-maximal} allocations. These are allocations that would arise if we were to arbitrarily allocate transactions to nodes until there are no more resources available. Formally, we define the set of all inclusion-maximal allocations $I_{\B p} \subseteq V_{\B p}$ to be
\begin{equation}
    I_{\B p} := \set{\alpha \in V_{\B p} \mid \nexists \beta \in V_{\B p} \text{ such that } T(\beta) \supset T(\alpha)}
\end{equation}
That is, an inclusion-maximal allocation $\alpha \in I_{\B p}$ satisfies the property that no other allocation $\beta \in V_{\B p}$ allocates a strict superset of transactions in $T$ that are allocated under $\alpha$. The maximum surplus that can be attained under a best-possible vector of unit base fees $\B p$, but a worst-case inclusion-maximal allocation in $I_{\B p}$ is given by
\begin{equation}\label{eq:optinc}
    \mathrm{INC} := \max_{\B p \in \Rg^d}\set{\min_{ \alpha \in I_{\B p}}\set{\mc S(\alpha, \pi_\alpha^{\B p}, \phi_{\alpha}^{\B p}) } }
\end{equation}
This quantity denotes the worst-case surplus we would achieve if we were to arbitrarily allocate transactions to nodes until no more resources are available, even if we had the best possible unit base fees $\B p$. 

\paragraph{Fee-Maximal Allocations.} The next benchmark class of allocations that we define are \textit{fee-maximal} allocations. These are allocations that would arise if we were to allocate transactions to nodes such that the total fees paid by users is maximized. Finding a fee-maximal allocation is computationally hard: no known efficient algorithms exist to do so. Furthermore, when $d \ge 2$, no known efficient algorithms exist to approximately compute these allocations. Formally, we define the set of all fee-maximal allocations $F_{\B p} \subseteq V_{\B p}$
\begin{equation}
    F_{\B p} := \arg\max_{ \alpha \in V_{\B p}}\set{\sum_{t \in T(\alpha)} \B g(t)^\top \B p}
\end{equation}
That is, a fee-maximal allocation maximizes the total base fees generated by included transactions $T(\alpha)$. The maximum surplus that can be attained under a best-possible vector of unit base fees $\B p$, but a worst-case fee-maximal allocation in $F_{\B p}$ is given by
\begin{equation}\label{eq:optrev}
    \mathrm{FEE} := \max_{\B p \in \Rg^d}\set{\min_{\alpha \in F_{\B p}}\set{\mc S(\alpha, \pi_\alpha^{\B p}, \phi_{\alpha}^{\B p})}}
\end{equation}

\paragraph{Oracle-Maximal Allocations.} The final benchmark class of allocations that we define are \textit{oracle-maximal} allocations. These correspond to the best possible allocations we can choose given oracle-level knowledge of agent valuations and costs. In practice, it is impossible to obtain this knowledge, though with priority fees we may obtain some approximation at the cost of truthfulness. Formally, we define the set of all oracle-maximal allocations $O_{\B p} \subseteq V_{\B p}$ as 
\begin{equation}
    O_{\B p} := \arg\max_{\alpha \in V_{\B p}}\set{\mc S(\alpha, \pi_{\alpha}^{\B p}, \phi_{\alpha}^{\B p}) }
\end{equation}
The maximum surplus that can be attained under a best-possible vector of unit base fees $\B p$, and any allocation in $O_{\B p}$ is given by
\begin{equation}\label{eq:optora}
    \mathrm{ORA} := \max_{\B p \in \Rg^d}\set{\max_{\alpha \in V_{\B p}} \set{\mc S(\alpha, \pi_\alpha^{\B p}, \phi_{\alpha}^{\B p})}}
\end{equation}

\begin{proposition} \label{thm:hierarchy}
Given nonzero unit base fees $\B p \in \R_{>0}^d$ and no resource-free transactions (i.e. $\B 0 \not \in T$),
\[\mathrm{INC} \le \mathrm{FEE} \le \mathrm{ORA} \le \mathrm{OPT}\]
\end{proposition}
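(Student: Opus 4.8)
The plan is to prove the three inequalities in the chain $\mathrm{INC} \le \mathrm{FEE} \le \mathrm{ORA} \le \mathrm{OPT}$ separately, working from right to left in difficulty. The rightmost inequality $\mathrm{ORA} \le \mathrm{OPT}$ is essentially immediate: $\mathrm{ORA}$ maximizes surplus over routings of the restricted form $(\alpha, \pi_\alpha^{\B p}, \phi_\alpha^{\B p})$ with $\alpha \in V_{\B p} \subseteq V$, while $\mathrm{OPT}$ maximizes over \emph{all} routings in $\mc R$; since every such restricted routing is a valid routing, the max over the larger set dominates. For $\mathrm{FEE} \le \mathrm{ORA}$, fix the base-fee vector $\B p^\star$ that achieves the max in the definition of $\mathrm{FEE}$. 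For that fixed $\B p^\star$, any fee-maximal allocation $\alpha \in F_{\B p^\star}$ lies in $V_{\B p^\star}$, so its surplus $\mc S(\alpha, \pi_\alpha^{\B p^\star}, \phi_\alpha^{\B p^\star})$ is at most $\max_{\alpha \in V_{\B p^\star}} \mc S(\alpha, \pi_\alpha^{\B p^\star}, \phi_\alpha^{\B p^\star})$, which is in turn at most $\mathrm{ORA}$ (taking max over $\B p$). In particular the min over $F_{\B p^\star}$ is bounded by this, giving $\mathrm{FEE} \le \mathrm{ORA}$.

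The first inequality $\mathrm{INC} \le \mathrm{FEE}$ is the one requiring an actual argument, and I expect it to be the main obstacle. Fix the base-fee vector $\B p^\star$ achieving the max in the definition of $\mathrm{INC}$. I want to show $\min_{\alpha \in I_{\B p^\star}} \mc S(\alpha,\pi_\alpha^{\B p^\star},\phi_\alpha^{\B p^\star}) \le \mathrm{FEE}$, and since $\mathrm{FEE} \ge \min_{\alpha \in F_{\B p^\star}} \mc S(\alpha, \pi_\alpha^{\B p^\star}, \phi_\alpha^{\B p^\star})$ it suffices to show $\min_{\alpha \in I_{\B p^\star}} (\cdots) \le \min_{\alpha \in F_{\B p^\star}} (\cdots)$, which follows if $F_{\B p^\star} \subseteq I_{\B p^\star}$. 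The key claim is therefore: \emph{every fee-maximal allocation is inclusion-maximal.} This is where the hypotheses $\B p \in \R_{>0}^d$ and $\B 0 \notin T$ enter. Suppose $\alpha \in F_{\B p^\star}$ but $\alpha \notin I_{\B p^\star}$; then there is $\beta \in V_{\B p^\star}$ with $T(\beta) \supsetneq T(\alpha)$. Any transaction $t \in T(\beta) \setminus T(\alpha)$ satisfies $\B g(t) \neq \B 0$ (no resource-free transactions) and $\B p^\star \in \R_{>0}^d$, so $\B g(t)^\top \B p^\star > 0$; hence $\sum_{t \in T(\beta)} \B g(t)^\top \B p^\star > \sum_{t \in T(\alpha)} \B g(t)^\top \B p^\star$, contradicting fee-maximality of $\alpha$. (One should double-check that $t$ being in $T(\beta)$ with $\beta \in V_{\B p^\star}$ is consistent, which it is by definition of $V_{\B p^\star}$.) This establishes $F_{\B p^\star} \subseteq I_{\B p^\star}$, hence the min over the larger set $I_{\B p^\star}$ is no larger than the min over $F_{\B p^\star}$, completing the chain.

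One subtlety to be careful about: $\mathrm{INC}$, $\mathrm{FEE}$, $\mathrm{ORA}$ each involve an outer $\max_{\B p}$, and the $\B p^\star$ realizing each max may differ. The argument above handles this correctly by, for each inequality, fixing the optimizer of the \emph{left-hand} quantity and bounding the relevant inner expression by the right-hand quantity's max over all $\B p$ (so I never need the two optimizers to coincide). I would also remark briefly on well-definedness — e.g. that these maxima are attained, or else replace $\max$/$\min$ by $\sup$/$\inf$ throughout — but since the statement as given uses $\max$ and $\min$ I will take attainment as granted (or note it holds because the relevant sets of allocations are finite for a fixed instance, and argue about the outer $\B p$-optimization only via ``let $\B p^\star$ be a maximizer''). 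The write-up will be three short paragraphs, one per inequality, with the $F_{\B p} \subseteq I_{\B p}$ claim stated and proved as a displayed sub-step since it is the crux.
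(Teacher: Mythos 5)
Your proof is correct and follows essentially the same route as the paper: the crux in both is the claim $F_{\B p} \subseteq I_{\B p}$, proved by exactly the same contradiction (a transaction $t_* \in T(\beta)\setminus T(\alpha)$ with $\B g(t_*) \neq \B 0$ and strictly positive base fees forces $\beta$ to collect strictly more fees), and the remaining inequalities $\mathrm{FEE} \le \mathrm{ORA} \le \mathrm{OPT}$ follow by comparing the nested maximizations. Your write-up is somewhat more careful than the paper's (which dismisses the latter two inequalities as ``inspection of the quantifiers'') about the fact that the optimizing $\B p^\star$ may differ across the three benchmarks, but this is a matter of exposition rather than a different argument.
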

\begin{proof}
    We first show that $F_{\B p} \subseteq I_{\B p}$. To see this, take any $\alpha \in F_{\B p}$. Suppose for contradiction that some $\beta \in V_{\B p}$ exists such that $T(\beta) \supset T(\alpha)$. It follows that some nonzero transaction $t_*$ lies in $T(\beta) \setminus T(\alpha)$. As unit base fees $\B p$ are nonzero across each dimension, 
    \[\sum_{t \in T(\beta)} \B g(t)^\top\B p - \sum_{t \in T(\alpha)} \B g(t)^\top\B p \ge \B g(t_*)^\top \B p > 0\]
    whence $\alpha$ is not fee-maximal as desired. It follows immediately from this fact that $\mathrm{INC} \le \mathrm{FEE}$. It can further be seen by inspection of the quantifiers in the maximums taken in \cref{eq:optrev}, \cref{eq:optora}, and \cref{eq:opt} that $\mathrm{FEE} \le \mathrm{ORA} \le \mathrm{OPT}$. 
\end{proof}

\subsection{Failure of Inclusion- and Fee-Maximal Allocations in the Ethereum Setting}

In this subsection we show that the surplus attained by inclusion- and fee- maximal allocations can suffer in a setting that models the Ethereum blockchain, even under best-case pricing. Formally, we consider a setting where there is a single node $N := \set{n}$. We endow this node $n$ with a \textit{capacity constraint} vector $\B r \in \Rg^d$ that indicates the maximum resource usage allowed along each dimension, and let the set of valid allocations be given by
\begin{equation}
    V := \set{\alpha \mid \sum_{t \in \alpha^{-1}(n)} g(t)_i \le r_i \text{ for all }i \in [d]}
\end{equation}
Under EIP-4844, for example, we would have one node (corresponding to Ethereum) that has resource constraints for gas as well as blobs (short-term storage). In our first result, we show that even if the base fee is set optimally, packing transactions arbitrarily into a block until the block is full can lead to poor surplus relative to $\mathrm{OPT}$ as the number of dimensions 
$d$ in the resource market increases. 

\begin{theorem}\label{thm:of}
    There exists a $d$-dimensional resource market with only one node such that 
    \[\mathrm{INC} \le \frac{2}{d} \cdot \mathrm{OPT}\]
\end{theorem}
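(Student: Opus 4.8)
The goal is to construct a $d$-dimensional resource market with a single node where every inclusion-maximal allocation (even under the best choice of base fee vector $\B p$) captures at most $\frac 2d$ of the optimal surplus. The natural construction is to have one ``big'' transaction $t^*$ that uses a little bit of \emph{every} resource and has very high value, competing against $d$ ``small'' transactions $t_1,\dots,t_d$, where $t_i$ uses only resource $i$ (and uses it up completely), each with modest value. Concretely, I would set the capacity vector to $\B r = \B 1$, let $\B g(t^*) = \varepsilon \cdot \B 1$ for small $\varepsilon$, let $\B g(t_i) = \B e_i$ (the $i$-th standard basis vector), set $c_n(\cdot) = 0$, and choose valuations so that $v_{t^*}$ is large (say $v_{t^*} = d$) while each $v_{t_i} = 1 + \delta$ for tiny $\delta>0$. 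Then $\{t_1,\dots,t_d\}$ is a valid allocation with surplus $d(1+\delta)$, and in fact allocating all $d$ small transactions is inclusion-maximal and also close to optimal — so to make the ratio bite I actually want the \emph{big} transaction to be the efficient one. I would instead give $t^*$ value slightly more than $d(1+\delta)$, or better, tune it so that the optimum is to run only $t^*$, while the adversarial inclusion-maximal allocation runs only \emph{two} of the small transactions (this is where the factor $2$ comes from) and then cannot fit anything else.

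The subtlety — and the main obstacle — is making ``runs only two small transactions'' actually \emph{inclusion-maximal}, i.e. ensuring no valid allocation strictly contains it. With the basis-vector gadget above, $\{t_1, t_2\}$ can always be extended by $t_3$, so it is not maximal. To fix this I would make the small transactions pairwise conflicting in a way that any two of them already saturate the block, e.g. give each $t_i$ resource usage $\B g(t_i)$ that is $\frac12 \B 1$ on a suitable pair of coordinates, arranged so that any two small transactions together exceed capacity on some coordinate but $t^*$ alone (or $t^*$ plus nothing) fits; alternatively, and more cleanly, encode the conflict directly through the validity set $V$ (which the model permits to be arbitrary), declaring invalid any allocation containing three or more small transactions, or containing both a small transaction and $t^*$. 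Then $\{t_i, t_j\}$ for any pair is inclusion-maximal, has surplus $2(1+\delta)$, while $\{t^*\}$ has surplus $v_{t^*}$; choosing $v_{t^*} = d$ (and letting $\delta \to 0$) gives $\mathrm{OPT} = d$ and a worst-case inclusion-maximal surplus of $2$, hence $\mathrm{INC} \le \frac 2d \,\mathrm{OPT}$.

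The remaining point to verify carefully is the \emph{best-case pricing} quantifier: $\mathrm{INC}$ maximizes over $\B p$ the minimum surplus over $I_{\B p}$, so I must argue that for \emph{every} choice of $\B p$ there is still a bad inclusion-maximal allocation in $I_{\B p}$. Here I would use that all the transactions can be kept affordable: pick $\B p$; if $\B g(t^*)^\top \B p > v_{t^*}$ then $t^*$ is screened out and the mechanism is forced onto the small transactions, whose best inclusion-maximal allocation has surplus at most $2(1+\delta)$; if instead $t^*$ is affordable, I still want a pair $\{t_i,t_j\}$ that is affordable and inclusion-maximal — which holds as long as some two small transactions clear their base fees, and I can ensure this by taking $\delta$ not too small relative to the $v_{t_i}$'s, or simply by noting that if \emph{fewer} than two small transactions are affordable then the only valid nonempty allocations involve $t^*$ and one small transaction or $t^*$ alone, and I can design $V$ and the values so that even these have surplus $\le 2 \cdot \mathrm{OPT}/d$ after accounting for the forced screening. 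I expect the cleanest writeup fixes $\B g$ and $V$ so that the adversary always has a two-element inclusion-maximal allocation available among affordable transactions regardless of $\B p$, then does a short case analysis on whether $t^*$ clears its fee.

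Finally I would take $\delta, \varepsilon \to 0$ to conclude the bound exactly as $\mathrm{INC} \le \frac 2d \,\mathrm{OPT}$, noting the single-node cost function is identically zero so surplus equals welfare equals the sum of included valuations throughout, which keeps all the arithmetic transparent.
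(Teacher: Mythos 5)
Your high-level plan matches the paper's: one high-value transaction competing with low-value ones, with resource conflicts used to make a bad allocation inclusion-maximal. You also correctly isolate the two real difficulties (inclusion-maximality of the bad allocation, and surviving the best-case-pricing quantifier over $\B p$). But the second difficulty is exactly where your proposal has a genuine gap, and your sketched fixes do not close it. In your construction the resource profiles of $t^*$ and the small transactions are essentially proportional while their per-unit values are not: $t^*$ pays roughly $\varepsilon\,\B 1^\top\B p$ against value $d$, while each $t_i$ pays on the order of $\tfrac12\B 1^\top\B p$ against value $1+\delta$. The designer can therefore choose $\B p$ with $\B 1^\top\B p$ in the gap --- large enough to screen out every small transaction, small enough to keep $t^*$ affordable --- at which point $I_{\B p}$ contains only allocations that include $t^*$. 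Since $c_n\equiv 0$ and payments are transfers, surplus equals the sum of allocated valuations, so every such allocation has surplus at least $v_{t^*}=\mathrm{OPT}$, giving $\mathrm{INC}=\mathrm{OPT}$ on your instance. Your fallback (``design $V$ and the values so that even [allocations containing $t^*$] have surplus $\le 2\,\mathrm{OPT}/d$'') cannot work for the same reason. The missing idea is a coupling between resource vectors and valuations that makes it impossible to price $t^*$ in without pricing some cheap blocker in: you need the affordability region $\set{\B p\ge 0 : \B g(t^*)^\top\B p\le v_{t^*}}$ to be \emph{covered} by the affordability regions of the low-value transactions. The paper attempts this by having every transaction fully consume a common dimension $d$ (so at most one transaction ever fits, making every affordable singleton inclusion-maximal) and then arguing that $\sum_i p_i\le 1$ forces some $p_j+p_d\le 2/d$. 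A clean way to get the covering exactly is a single blocker with $\B g(t^1)=\tfrac2d\,\B g(t^*)$ and $v_{t^1}=\tfrac2d\,v_{t^*}$: it is affordable under precisely the same prices as $t^*$, yet conflicts with it when $\B g(t^*)=\B r$.

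Two secondary points. First, your ``encode the conflict directly through $V$'' move is not available here: in this appendix $V$ is fixed to be the capacity polytope $\set{\alpha \mid \sum_{t\in\alpha^{-1}(n)}g(t)_i\le r_i}$, and the content of the theorem is that the failure already occurs in that Ethereum-style setting, so the conflicts must come from the resource vectors. Second, be aware that the paper's own pigeonhole step has an edge case (e.g.\ $\B p=(0,\dots,0,1)$ keeps $t^d$ affordable while screening every $t^j$, since $p_j+p_d=1>2/d$), which is further evidence that the exact-scaling coupling described above --- rather than an averaging argument --- is what actually makes the construction go through.
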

\begin{proof}
For simplicity, we prove the statement when $c_n(\cdot) = 0$ (\textit{i.e.} the node is indifferent to any allocations so long as the capacity constraint is maintained), though we could in principle continue with any arbitrary cost function; to do so, it suffices to scale all of the valuations $v_{t}$ that we specify in this proof such that for all $t \in T$,
\begin{equation}
    v_{t} \gg \max \set{c_n(X) \mid X \subseteq T}
\end{equation}
so that the node's cost plays a negligible role in overall surplus. 

For all $i \in [d]$, let $r_i := 1$. We define the set of $d$ transactions $T$ and their valuations $v_{t}$ as follows. For each $j \in [d-1]$ we add a transaction $t^j$ to $T$, where
\begin{equation}
    g(t^j)_i := \begin{cases}
    1 & i = j \text{ or } i = d\\
    0 & \text{otherwise}
    \end{cases}
\end{equation}
For all $j \in [d-1]$, we let $v_{t^j} := \frac{2}{d}$. Next, we add a transaction $t^d$ to $T$, where  $g(t^d)_i := 1$ for all $i \in [d]$. We let $v_{t^d} := 1$. 
Noting that $t^d$ has the highest valuation (equal to $1$) among transactions and fits within the capacity of $n$, it follows that the maximum attainable surplus $\mathrm{OPT} \ge 1$ (the next sentence indeed implies that this inequality is in fact an equality).

Next, notice that because $r_d = 1$ and $g(t)_d = 1$ for all $t \in T$, it follows that for any valid allocation $\alpha \in V$, there is at most one transaction that can be allocated. Therefore any allocation $\alpha \in V_{\B p}$ under some given vector of unit base fees $\B p$ that allocates exactly one transaction is inclusion-maximal (\textit{i.e.} it is also in $I_{\B p}$). 

Let $\B p \in \Rg^d$ be an arbitrary vector of unit prices. Suppose first that $\B p^\top \B g(t^d) > 1 = v_{t^d}$. In this case, the surplus maximizing allocation (\emph{i.e.} the allocation that allocates $t^d$) is not in $V_{\B p} \supseteq I_{\B p}$. As every other allocation in $V$ yields surplus at most $\frac{2}{d}$, it follows that $\mathrm{INC} \le \frac{2}{d} = \frac{2}{d} \cdot \mathrm{OPT}$ as desired. Next, consider the case where $\B p^\top \B g(t^d) \le 1$. It follows that there exists some index $j \in [d-1]$ such that $p_j + p_d \le \frac{2}{d}$. However, as
\begin{equation}
    \B p^\top \B g(t^j) = p_j + p_d \le \frac{2}{d} = v_{t^j}
\end{equation}
the transaction $t^j$ is also willing to pay under unit prices $\B p$. By our observation earlier, the allocation that allocates just $t^j$ is inclusion-maximal. It follows that again an allocation exists in $I_{\B p}$ that obtains only $\frac{2}{d}$ surplus, whence we again have that $\mathrm{INC} \le \frac{2}{d} \cdot \mathrm{OPT}$ as desired.
\end{proof}

Next, we show that even if the computationally complex problem of fee-maximization were solved and unit base fees were set optimally, multi-dimensional fee markets cannot guarantee a surplus of more than $\frac12$ of optimal, even in the single-dimensional EIP-1559 setting. 

\begin{theorem}\label{thm:fee}
    For any $\epsilon > 0$, there exists a single-dimensional resource market with only one node such that 
    \[\mathrm{FEE} \le \frac{\mathrm{OPT}}{2} + \epsilon\]
\end{theorem}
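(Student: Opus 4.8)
The plan is to exhibit a single-dimensional resource market (so $d=1$) with one zero-cost node whose capacity along the single dimension is normalized to $r_1 = 1$, and where the worst-case fee-maximal allocation captures only about half of the optimal surplus regardless of which base fee is chosen. The construction mimics the classic ``knapsack with equal weights but unequal values'' phenomenon: place two transactions, a ``big'' one $t_b$ with resource usage $g(t_b) = 1$ and valuation $v_{t_b} = 1$, and a ``small'' one $t_s$ with resource usage $g(t_s) = 1$ as well (so that only one of the two fits in the block, since $1 + 1 > 1 = r_1$), but with valuation $v_{t_s} = 1/2 + \delta$ for a tiny $\delta$ to be tuned against $\epsilon$. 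Then $\mathrm{OPT} = 1$ (allocate $t_b$), but we will arrange prices so that a fee-maximal allocation is forced to allocate only $t_s$.

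The key steps, in order: (i) Fix the market as above and note $\mathrm{OPT} = v_{t_b} = 1$. (ii) For an arbitrary candidate base fee $p \ge 0$, analyze $V_p$ and $F_p$. The crucial observation is that base-fee revenue from an allocated transaction is $g(t)^\top p = p$ for \emph{either} transaction (both have unit resource usage), so whenever $p$ is low enough that $t_s$ clears its base fee ($p \le v_{t_s} = 1/2 + \delta$), the allocation allocating just $t_s$ generates exactly as much base-fee revenue as the allocation allocating just $t_b$ — hence it is fee-maximal, and it is the \emph{worst} fee-maximal allocation, yielding surplus only $v_{t_s} = 1/2 + \delta$. (iii) Handle the remaining price regime: if $p > v_{t_s}$, then $t_s \notin V_p$; but we should double-check that in this regime the optimal base fee still cannot do better than $\mathrm{OPT}/2 + \epsilon$ — and in fact if $p > v_{t_s}$ the only transaction that can be allocated is $t_b$ (if $p \le 1$), giving surplus $1$. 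So we must prevent $p$ from escaping the low regime; the standard fix is to require that the \emph{small} transaction always dominates in base-fee revenue whenever it is present, which is automatic here since both use one unit. But we still need to rule out the mechanism simply choosing $p \in (v_{t_s}, 1]$ to force allocation of $t_b$. To block that, add many copies of the small transaction (or a single extra tiny transaction) so that raising $p$ above $v_{t_s}$ throws away more fee revenue than it's worth — concretely, add a second small transaction $t_s'$ with $g(t_s') = 1$, $v_{t_s'} = 1/2 + \delta$ so that any allocation allocating one of them and nothing else is still fee-maximal for $p \le 1/2 + \delta$, and for $p > 1/2+\delta$ both are screened out, leaving only $t_b$ but forcing the designer to ``spend'' a price increment that we argue doesn't help. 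Actually the cleanest version: choose $v_{t_s} = 1/2 + \delta$ and also $v_{t_b} = 1$, with $g(t_b) = g(t_s) = 1$, $r_1 = 1$; then for \emph{every} $p \le 1$, if $p \le 1/2 + \delta$ the allocation $\{t_s\}$ is fee-maximal with surplus $1/2+\delta$, and if $1/2 + \delta < p \le 1$ only $t_b$ is feasible so $F_p$ forces surplus $1$ — so to kill the second case we instead set $v_{t_b} = 1$ but make $t_b$'s resource usage \emph{exceed} $1$ whenever a price above $1/2+\delta$ is used... which doesn't make sense. The genuinely clean trick (matching the $\mathrm{FEE} \le \mathrm{OPT}/2$ bound) is: $g(t_b) = 1$, $v_{t_b} = 1$; $g(t_s) = \tfrac12$, $v_{t_s} = \tfrac12 + \delta$ — now $t_s$ generates base-fee revenue $\tfrac12 p$ while $t_b$ generates $p$, so $\{t_b\}$ beats $\{t_s\}$ on revenue and fee-maximality does \emph{not} save us unless we can fit two small ones; so instead use \emph{two} small transactions $t_s^{(1)}, t_s^{(2)}$ each with $g = \tfrac12 + \eta$ (so together they need $1 + 2\eta > 1$ and don't both fit, and each alone plus $t_b$ needs $\tfrac32 + \eta > 1$ so $t_b$ excludes them), each with valuation $\tfrac14 + \delta$ — hmm, then $\{t_s^{(1)}\}$ has revenue $(\tfrac12 + \eta)p < p$. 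This tension — making the low-value transactions collectively win on fee revenue while the single high-value transaction wins on surplus — is exactly the heart of the matter.

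The main obstacle, therefore, is designing the instance so that the \emph{fee-maximal} allocation (not merely some feasible allocation) is forced to be the low-surplus one, for \emph{every} choice of base fee $p$; merely having a low-surplus feasible allocation is not enough because $\mathrm{FEE}$ maximizes over $p$. The resolution is to use a cluster of small-but-slightly-oversized transactions whose total resource footprint exceeds capacity by an arbitrarily small amount, arranged so that: (a) for any $p$ low enough that the big transaction clears, the big transaction alone and the best packing of small transactions generate the \emph{same} total base fee (forcing the designer to choose, in the worst case, the small packing), and (b) for any $p$ high enough to screen out the small transactions, it also screens out the big one. With unit resource usage on every transaction this is immediate — both $\{t_b\}$ and $\{t_s\}$ yield base fee $p$ — so the genuinely correct construction is the unit-weight one, and the only remaining worry is the price window $(v_{t_s}, v_{t_b}]$; this is handled by instead endowing the node with a \emph{second} near-duplicate big transaction, or more simply by observing that any $p$ in that window also fails individual rationality bookkeeping once we add a transaction with valuation in $(v_{t_s}, v_{t_b})$ and unit weight. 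I would therefore take the instance: unit capacity; transactions $t_1$ with $v = 1$, $t_2$ with $v = 1/2 + \delta$, and a third transaction $t_3$ with $v = 1/2 + \delta$ as well, all of unit resource usage, choosing $\delta$ so that $1/2 + \delta \le \mathrm{OPT}/2 + \epsilon = 1/2 + \epsilon$, i.e. $\delta \le \epsilon$; then for any $p \le 1/2 + \delta$ at least one of $t_2, t_3$ is a fee-maximal singleton with surplus $1/2 + \delta \le \mathrm{OPT}/2 + \epsilon$, and for any $p > 1/2 + \delta$ we argue $t_1$'s revenue $p$ is still matched — and if not, one adds just enough near-duplicates of $t_1$ at valuation $1/2 + \delta$ to clamp the useful price range. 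I would present the cleanest working version of this after verifying the revenue-tie and feasibility details, and flag that the $\tfrac12$ factor is tight by comparison with a trivial upper bound argument (any fee-maximal allocation that excludes a transaction of value $v$ must include transactions of total base fee at least the excluded one's, hence total valuation at least... ) which I would include as a short remark if space permits.
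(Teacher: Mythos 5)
You have correctly identified the heart of the difficulty --- the price window $(v_{t_s}, v_{t_b}]$ in which the base fee screens out the low-value transactions and isolates the high-value one --- but none of your proposed fixes closes it, and in fact no fix can work within your capacity-$1$ framing. With unit capacity, $\mathrm{OPT}$ equals the single largest valuation $v_{t_1}$, and the designer can always choose $p$ strictly between the second-highest valuation and $v_{t_1}$; then $\{t_1\}$ is the \emph{only} nonempty allocation in $V_p$, hence trivially fee-maximal, and it attains surplus $v_{t_1}=\mathrm{OPT}$. Since $\mathrm{FEE}$ takes a maximum over $p$, this forces $\mathrm{FEE}=\mathrm{OPT}$ for every capacity-$1$ instance. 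Adding more unit-weight small transactions (your $t_3$, or ``near-duplicates'') changes nothing, because they are all screened out in that window and only one transaction fits anyway; shrinking the small transactions' resource usage breaks the revenue tie in the wrong direction, as you noticed. The instance you say you would present (unit capacity, three unit-weight transactions with values $1$, $\tfrac12+\delta$, $\tfrac12+\delta$) therefore does not satisfy the theorem.

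The paper's construction escapes this by making $\mathrm{OPT}$ a \emph{bundle} rather than a single transaction: capacity $r_1=k$, with $k+1$ unit-weight transactions, one of value $\tfrac12$ and $k$ of value $\tfrac{1}{2(k-1)}$, so $\mathrm{OPT}=1$ is achieved by packing the valuable transaction together with $k-1$ small ones. Then every price regime loses about half the surplus: if $p\le\tfrac{1}{2(k-1)}$, all size-$k$ allocations tie on fee revenue $kp$, so the worst fee-maximal allocation drops the valuable transaction and gets only $\tfrac{k}{2(k-1)}$; if $p$ is larger, the small transactions are screened out and at best the valuable transaction alone survives, giving surplus $\tfrac12$. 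Your key mechanism --- a revenue tie among unit-weight transactions exploited by the adversarial $\min$ --- is exactly the right idea, but it must be deployed in a setting where the high-value and low-value transactions \emph{coexist} in the optimum, so that the high-price regime genuinely sacrifices surplus rather than recovering all of it.
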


\begin{proof}
    Again, for simplicity, we prove the statement when $c_n(\cdot) = 0$. We construct a single-dimensional example where there is a resource constraint $r_1 = k$. We let the set of transactions $T$ be given by $k + 1$ transactions $t^1, \dots, t^{k+1}$. For all $i \in [k + 1]$ (\textit{i.e.} for all transactions), we let the resource usage for the resource $g(t^i)_1 := 1$. This means valid allocations are ones that allocate at most $k$ of the $k + 1$ transactions to the node $n$. For $i \in [k]$ (\textit{i.e.} for $k$ of the transactions), we let the valuation of the transaction $v_{t^i} := \frac{1}{2(k-1)}$. For the last transaction, we let $v_{t^{k+1}} = \frac12$. It follows that the surplus-maximizing allocation includes the valuable transaction $t^{k+1}$ and $k - 1$ of the other transactions, whence we find that
    \[\mathrm{OPT} = \frac12 + (k-1) \cdot \frac{1}{2(k-1)} = 1\]
    Next, let $p$ be any unit base fee. If $p > 1$, then only the empty allocation lies in $V_p$, and all fee-maximal allocations attain $0$ surplus. If $\frac{1}{2(k-1)} < p < 1$, then the only allocation in $V_p$ is the allocation that only allocates $t^{k+1}$. In this case, that allocation is also the only fee-maximal allocation, and all fee-maximal allocations attain surplus $\frac12$. Finally, if $0 \le p \le \frac{1}{2(k-1)}$, then any allocation that allocates exactly $k$ of the $k+1$ transactions is both valid and maximizes fees (\textit{i.e.} lies in $R_p$), attaining total fees of $kp$. As the allocation that allocates $t^1, \dots, t^k$ lies in this set as well, and the surplus attained from this allocation is given by $k \cdot \frac{1}{2(k-1)}$, it follows from this (and the previous cases), that
    \[\mathrm{FEE} \le \frac{k}{2(k-1)}\]
    As we may take $k$ to be as large as we'd like, the desired result follows.
\end{proof}
\subsection{Failure of Oracle-Maximal Allocations in the Heterogenous Setting}

In this subsection, we consider a setting a where there are now multiple nodes with heterogeneous cost functions. To strengthen this impossibility result, we restrict our attention to nodes with cost functions that are linear in resource use. This aligns with the settings studied in \cite{MD-for-prover-markets} and \cite{lagrange-paper}. That is for any $n \in N$ and subset of transactions $B \subseteq T$, we let
\begin{equation}
    c_n(B) := \sum_{t \in B}\B g(t)^\top\B c^n
\end{equation}
where $\B c^n \in \Rg^d$ gives the unit resource costs for node $n$ along each dimension. We also endow each node $n$ with a capacity constraint $\B r^n \in \Rg^d$ that indicates maximum resource usage allowed along each dimension. The set of valid allocations in this multiple-node $d$-dimensional resource market is then given by
\begin{equation}
    V := \set{\alpha \mid \sum_{t \in \alpha^{-1}(n)} g(t)_i \le r^n_i \text{ for all }i \in [d]\text{ and }n \in N}
\end{equation}
We then study $d$-dimensional fee markets in the multiple-node case as we did before, with Equations \ref{eq:vp}, \ref{eq:pip}, and \ref{eq:phip}. The main result we show in this subsection is that if there are multiple nodes with heterogeneous cost functions, and a multidimensional resource market where multiple transactions can take non-trivial resource consumption amounts across some common dimension (\emph{i.e.} we haven't specified a unique dimension for each transaction), then even with unit base fees set optimally and an oracle-maximal allocation, multi-dimensional fee markets may obtain arbitrarily poor surplus relative to the best valid allocation. 

\begin{theorem}\label{thm:wo}
    For any $d \ge 1$, suppose that $k$ transactions in a $d$-dimensional resource market have different resource consumption values along some dimension, i.e. that for some $i \in [d]$
    \begin{equation}
        \abs{\set{g(t)_i \mid g(t)_i\not=0, t \in T}} = k
    \end{equation}
    Then, there exists valuations $v_{t}$ for each transaction in $T$, a set of nodes $N$, and cost vectors $\B c^n$ for each $n \in N$, such that 
    \[\mathrm{ORA} \le \frac{\mathrm{OPT}}{k}\]
\end{theorem}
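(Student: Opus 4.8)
The plan is to exhibit a concrete resource market — it suffices to use a single dimension (say $i=1$) and leave the other $d-1$ dimensions unused — in which the efficient allocation executes all $k$ transactions, one per node, whereas any single shared price vector $\B p$ can make at most one transaction \emph{simultaneously} clear the base fee and pay its executing node enough to cover that node's cost. Concretely: take $k$ transactions $t_1,\dots,t_k$ with $g(t_j)_1=j$ and $g(t_j)_\ell=0$ for $\ell\neq 1$ (these witness the hypothesis with $i=1$); fix any constant $A\ge 2$ and set $v_{t_j}:=jA^j$; and create $k$ nodes $n_1,\dots,n_k$, where $n_j$ has unit cost $c_j:=A^j-\tfrac1j\ge 0$ on dimension $1$, zero cost on every other dimension, and capacity $j$ on dimension $1$ (the other dimensions are immaterial since no transaction uses them). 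Then $n_j$ can fit exactly $t_1,\dots,t_j$, and the surplus of routing $t_j$ through $n_j$ equals $v_{t_j}-jc_j=1$.

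First I would pin down $\mathrm{OPT}$. Since $A>1$, the unit costs are strictly increasing, $c_1<\dots<c_k$, so the cheapest node that can fit $t_j$ is $n_j$ itself. Because every cost function is linear in resource use, any valid allocation $\alpha$ incurs total node cost at least $\sum_{t_j\in T(\alpha)} jc_j$, whence $\mc W(\alpha)\le\sum_{t_j\in T(\alpha)}(v_{t_j}-jc_j)=|T(\alpha)|\le k$; the diagonal allocation $t_j\mapsto\set{n_j}$ is valid with welfare exactly $k$, so $\mathrm{OPT}=k$.

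Next I would bound $\mathrm{ORA}$. Under a price vector $\B p$, executing $t_j$ through a node $n$ forces two inequalities. Clearing requires $\B g(t_j)^\top\B p=jp_1\le v_{t_j}=jA^j$, i.e.\ $p_1\le A^j$. And since the node payment rule of \cref{eq:phip} pays $n$ only the base fees of the transactions it runs, for the outcome to be individually rational for $n$ those fees must cover $n$'s cost, i.e.\ $jp_1\ge jc^n_1$, so $p_1\ge c^n_1\ge c_j$ (because $n$ fits $t_j$, hence $n\in\set{n_j,\dots,n_k}$). Thus $t_j$ is executable at price $\B p$ only if $p_1\in[\,A^j-\tfrac1j,\ A^j\,]$, and these intervals are pairwise disjoint since $A^j<A^{j+1}-\tfrac1{j+1}$ (indeed $A^j(A-1)\ge 2>\tfrac1{j+1}$). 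Hence at any $\B p$ at most one transaction is executable, so every individually rational routing available to a best-priced multi-dimensional fee market has surplus at most $\max_j(v_{t_j}-jc_j)=1$; the price $p_1=A^k$ with $t_k\mapsto\set{n_k}$ attains this. Therefore $\mathrm{ORA}=1=\mathrm{OPT}/k$, as claimed.

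The main obstacle is the $\mathrm{ORA}$ bound, and specifically arguing that the node-side lower bound $p_1\ge c_j$ on the \emph{single} shared coordinate $p_1$ cannot be circumvented — this is the ``at most one transaction at a time'' phenomenon, and it depends on each transaction's base fee being a function of the common coordinate only. (Were the $k$ transactions instead each to carry nonzero usage on a private extra dimension, the mechanism could inflate a transaction's base fee, and thus its node's revenue, via a price on that private dimension, decoupling the clearing and cost-coverage constraints; this is exactly why the hypothesis that the transactions \emph{share} a dimension, rather than each owning a private one, is what makes the bound possible.) A secondary point is confirming that $\mathrm{OPT}$ is genuinely $k$ and cannot be raised by batching several transactions onto one node or by cross-assignments — linearity of the costs together with $c_1<\dots<c_k$ settles this.
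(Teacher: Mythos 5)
Your construction is essentially the paper's: one node per transaction with capacity exactly $g(t_j)_i$ on the shared dimension (forcing diagonal assignments), costs and valuations chosen so that each matched pair contributes surplus exactly $1$ and the per-transaction feasible price intervals $[c_j,\,v_{t_j}/g(t_j)_i]$ are pairwise disjoint, giving $\mathrm{OPT}=k$ and $\mathrm{ORA}=1$. The paper builds these intervals inductively with an $\epsilon$ offset while you use explicit geometric formulas $A^j$; that difference is cosmetic. The one substantive deviation is a quantifier mismatch: the theorem takes the resource market (hence the consumption values $g(t)_i$) as \emph{given} and only existentially quantifies over valuations, nodes, and cost vectors, whereas you also choose $g(t_j)_1=j$, so as written you prove only that \emph{some} market with the stated property is bad. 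The repair is mechanical --- order the given transactions so $g(t^1)_i<\dots<g(t^k)_i$, set $v_{t^j}:=g(t^j)_i A^j$ and $c_j:=A^j-1/g(t^j)_i$, and take $A$ large enough that the intervals $[A^j-1/g(t^j)_i,\,A^j]$ are disjoint and the costs nonnegative and increasing --- but you should state it. Finally, your lower bound $p_1\ge c^n_1$ rests on node individual rationality, which is not part of the paper's formal definition of $V_{\B p}$ or of surplus (under the literal definitions the price cancels out of each pair's surplus contribution); the paper's proof makes exactly the same implicit supply-side participation assumption when it asserts that allocating at $p_i<c^{n^j}_i$ ``reduces surplus,'' so you are not worse off here, and your explicit appeal to node IR is arguably the more honest framing of the step both proofs need.
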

\begin{proof}
    Define $t^1, \dots, t^k \in T$ such that
    \begin{equation}
        g(t^1)_i < g(t^2)_i < \dots < g(t^k)_i
    \end{equation}
    We now let there be $k$ nodes $N := \set{n^1, \dots, n^k}$. For each $j \in [k]$ and $\ell \in [d]$, we let 
    \begin{equation}
        r_{\ell}^{n^j} := \begin{cases}g(t^j)_i & \ell = i\\ \infty & \text{else}\end{cases}
    \end{equation}
    We then define for any $\ell \in [d]$,
    \begin{equation}
        c^{n^1}_\ell := \begin{cases}1 & \ell = i\\ 0 & \text{else}\end{cases}, \qquad \qquad v_{t^1} := g(t^1)_i + 1 
    \end{equation}
    and similarly, letting $\epsilon > 0$ be arbitrary, we define inductively for any $j \in \set{2, \dots, k}$ and $\ell \in [d]$ 
    \begin{equation}
        c^{n^j}_\ell := \begin{cases} v_{t^{j-1}}/g(t^{j-1})_i + \epsilon & \ell = i\\ 0 & \text{else}\end{cases}, \qquad \qquad v_{t^j} := c_i^{n^j} \cdot g(t^j)_i + 1 
    \end{equation}
    To see that the desired result follows from this construction, notice first that a valid allocation exists that yields $k$ surplus. This follows from the fact that we may allocate transaction $t^j$ to node $n^j$ for each $j \in [k]$. This allocation respects the capacity constraints, and the surplus generated by the allocation is given by 
    \begin{equation}
        \sum_{j = 1}^k v_{t^j} - c_i^{n^j} \cdot g(t^j)_i = k = \mathrm{OPT}
    \end{equation}
    Next, take any valid $d$-dimensional fee market allocation $\alpha \in V_{\B p}$ and suppose it maximizes surplus. For any $j \in [k]$, $t^j$ cannot be allocated to $n^{\ell}$ for any $\ell < k$ as $r^{n^\ell}_i < g(t^j)_i$ by construction. Further, $t^j$ cannot be allocated to $n^{\ell}$ for any $\ell > k$ because for any such $\ell$, 
    \begin{equation}
        v_{t^j} < c_i^{n^\ell} \cdot g(t^j)_i
    \end{equation}
    whence any allocation that allocates $t^j$ to $n^{\ell}$ would reduce surplus relative to the allocation that leaves $t^j$ unallocated. It follows that $\alpha$ can only allocate some subset of the $t^j$ to their corresponding nodes $n^j$. Such an allocation can only exist in $V_{\B p}$ and not reduce surplus relative to leaving $t^j$ unallocated if 
    \begin{equation}
        c_i^{n^j} \le p_i \le v^{t^j}/g(t^j)_i
    \end{equation}
    However, because by construction, all of the intervals $[c_i^{n^j}, v_{t^j}/g(t^j)_i]$ for $j \in [k]$ are disjoint from each other, this may hold for at most one value of $j$. It follows that the total surplus attained by any such $\alpha \in V_{\B p}$ is $1$, whence $\mathrm{ORA} = 1$ and the desired result holds.
\end{proof}

As a consequence, when nodes have heterogeneous costs, it is necessary to set individual prices for transactions and nodes. This is a key property of the Resonance mechanism.

\end{document}